\title{Compressed Multiple Pattern Matching}
\author{Dmitry Kosolobov}{University of Helsinki, Helsinki, Finland}{dkosolobov@mail.ru}{}{}
\author{Nikita Sivukhin}{Ural Federal University, Ekaterinburg, Russia}{sivukhin.nikita@yandex.ru}{}{}
\authorrunning{D. Kosolobov and N. Sivukhin} 
\keywords{multiple pattern matching, compressed space, Aho--Corasick automaton}
\newcommand*{\istr}[1]{\ensuremath{\mathsf{istr}(#1)}}
\newcommand*{\trans}[2]{\ensuremath{\textsf{next}(#1, #2)}}
\newcommand*{\transdef}{\ensuremath{\textsf{next}}}
\newcommand*{\dict}{\ensuremath{\mathcal{D}}}
\newcommand*{\tree}[1]{\ensuremath{\mathcal{T}}}
\newcommand*{\treedef}{\ensuremath{\mathcal{T}}}
\newcommand*{\str}[1]{\ensuremath{\textsf{str}(#1)}}
\newcommand*{\flag}[1]{\ensuremath{\textsf{mark}(#1)}}
\newcommand*{\flagdef}{\ensuremath{\textsf{mark}}}
\newcommand*{\link}[1]{\ensuremath{\textsf{failure}(#1)}}
\newcommand*{\linkdef}{\ensuremath{\textsf{failure}}}
\newcommand*{\report}[1]{\ensuremath{\textsf{report}(#1)}}
\newcommand*{\reportdef}{\ensuremath{\textsf{report}}}
\newcommand*{\num}[1]{\ensuremath{\textsf{num}(#1)}}
\newcommand*{\numdef}{\ensuremath{\textsf{num}}}
\newcommand*{\numdictdef}[0]{\ensuremath{\textsf{B}}}
\newcommand*{\numdictgen}[1]{\ensuremath{\textsf{B}_{#1}}}
\newcommand\footnoteref[1]{\protected@xdef\@thefnmark{\ref{#1}}\@footnotemark}
\begin{document}

\maketitle

\begin{abstract}
Given $d$ strings over the alphabet $\{0,1,\ldots,\sigma{-}1\}$, the classical Aho--Corasick data structure allows us to find all $occ$ occurrences of the strings in any text $T$ in $O(|T| + occ)$ time using $O(m\log m)$ bits of space, where $m$ is the number of edges in the trie containing the strings. Fix any constant $\varepsilon \in (0, 2)$. We describe a compressed solution for the problem that, provided $\sigma \le m^\delta$ for a constant $\delta < 1$, works in $O(|T| \frac{1}{\varepsilon} \log\frac{1}{\varepsilon} + occ)$ time, which is $O(|T| + occ)$ since $\varepsilon$ is constant, and occupies $mH_k + 1.443 m + \varepsilon m + O(d\log\frac{m}{d})$ bits of space, for all $0 \le k \le \max\{0,\alpha\log_\sigma m - 2\}$ simultaneously, where $\alpha \in (0,1)$ is an arbitrary constant and $H_k$ is the $k$th-order empirical entropy of the trie. Hence, we reduce the $3.443m$ term in the space bounds of previously best succinct solutions to $(1.443 + \varepsilon)m$, thus solving an open problem posed by Belazzougui. Further, we notice that $L = \log\binom{\sigma (m+1)}{m} - O(\log(\sigma m))$ is a worst-case space lower bound for any solution of the problem and, for $d = o(m)$ and constant~$\varepsilon$, our approach allows to achieve $L + \varepsilon m$ bits of space, which gives an evidence that, for $d = o(m)$, the space of our data structure is theoretically optimal up to the $\varepsilon m$ additive term and it is hardly possible to eliminate the term $1.443m$. In addition, we refine the space analysis of previous works by proposing a more appropriate definition for $H_k$. We also simplify the construction for practice adapting the fixed block compression boosting technique, then implement our data structure, and conduct a number of experiments showing that it is comparable to the state of the art in terms of time and is superior in space.
\end{abstract}

\section{Introduction}

Searching for multiple patterns in text is a fundamental stringology problem that has numerous applications, including bioinformatics~\cite{Gusfield}, search engines~\cite{WuManber}, intrusion detection systems~\cite{LiaoLinLinTung,PaoEtAl}, shortest superstring approximation~\cite{AlankoNorri}, and others. The classical solution for the multiple pattern matching is the Aho--Corasick data structure~\cite{AhoCorasick}, which, however, does not always fulfil space requirements of many such applications due to the rapid growth of the amounts of data in modern systems. To address this issue, several space-efficient multiple pattern matching data structures were developed in the last decade. In this paper we improve the space consumption in a state-of-the-art solution for the problem, simplify the compression method used in this solution by adapting the known fixed block compression boosting technique, and give an evidence that the achieved space is, in a sense, close to optimal; in addition, we refine the theoretical space analysis of a previous best result, and implement our construction and conduct a number of experiments showing that it is comparable to the existing practical data structures in terms of time and is superior in space. Before discussing our contribution in details, let us briefly survey known results in this topic.

The Aho--Corasick solution builds a trie of all patterns augmented with additional structures overall occupying $O(m\log m)$ bits (hereafter, $\log$ denote logarithm with base 2), where $m$ is the number of edges in the trie, and allows us to find all $occ$ occurrences of the patterns in any text $T$ in $O(|T| + occ)$ time. The $O(m\log m)$-bit space overhead imposed by this solution might be unacceptably high if one is processing large sets of patterns; for this case, several succinct and compressed data structures were developed in the last decade~\cite{Belazzougui,ChanHonLamSadakane,FeigenblatPoratShiftan,HonEtAl,HonEtAlOld,KopelowitzPoratRozen,TamWuLamYiu}. In this paper we are especially interested in the two closely related results from~\cite{Belazzougui} and~\cite{HonEtAl}, which provide currently the best time and space bounds for the multiple pattern matching problem. In~\cite{Belazzougui} Belazzougui designed a compact representation of the Aho--Corasick scheme that works in the same $O(|T| + occ)$ time but stores the trie with all additional structures in only $m\log\sigma + 3.443 m + o(m) + O(d\log\frac{m}{d})$ bits, where $\sigma$ is the alphabet size and $d$ is the number of patterns; in addition, he showed that the space can be improved to $mH_0 + 3.443 m + o(m) + O(d\log\frac{m}{d})$ bits with no slowdown provided $\sigma \le m^\delta$ for a constant $\delta < 1$, where $H_0$ is the zeroth-order empirical entropy of the trie. In~\cite{HonEtAl} Hon et al.~further lowered the space to $m H_k + 3.443 m + o(m) + O(d\log\frac{m}{d})$ bits (again, assuming $\sigma \le m^\delta$) by simply applying the compression boosting technique~\cite{FerraginaEtAlBoosting}, where $H_k$ is the $k$th-order empirical entropy of the trie (see clarifications below) and $k$ is any fixed integer such that $0 \le k \le \alpha\log_\sigma m - 1$, for arbitrary constant $\alpha \in (0,1)$. This topic is rich with other related results, which, for instance, support dynamic modifications of the patterns \cite{ChanHonLamSadakane,FeigenblatPoratShiftan2,FeigenblatPoratShiftan}, try to process $T$ in real-time~\cite{KopelowitzPoratRozen}, allow randomization~\cite{CliffordEtAl,GolanPorat}, consider hardware implementations~\cite{DimopoulosPapaefstathiouPnevmatikatos}, etc. \mbox{In this paper we focus on the basic functionality as in \cite{Belazzougui} and \cite{HonEtAl}}.

Belazzougui posed the following open problem~\cite{Belazzougui}: can we reduce the constant $3.443$ in the space of his (and Hon et al.'s) result without any significant slowdown? We solve this problem affirmatively describing a data structure that, provided $\sigma \le m^\delta$ for a constant $\delta < 1$, occupies $m H_k + 1.443 m + \varepsilon m + O(d\log\frac{m}{d})$ bits of space, for an arbitrarily chosen constant $\varepsilon \in (0, 2)$, and answers pattern matching queries on any text $T$ in $O(|T| \varepsilon^{-1} \log\varepsilon^{-1} + occ)$ time, which is $O(|T| + occ)$ since $\varepsilon$ is constant. Then, we notice that $L = \log\binom{\sigma(m+1)}{m} - O(\log(\sigma m))$ is a worst-case space lower bound for any multiple pattern matching data structure and, for $d = o(m)$ and constant $\varepsilon$, our construction allows to achieve $L + \varepsilon m$ bits of space; observe that, for $\sigma = \omega(1)$, we have $L = m\log\sigma + m\log e + o(m) \approx m\log\sigma + 1.443m + o(m)$ (see~\cite{Belazzougui}), which gives an evidence that, for $d = o(m)$, our space bound is optimal up to the $\varepsilon m$ additive term and it is hardly possible to remove the term $1.443m$. In addition, we argue that the definition of $H_k$ borrowed by Hon et al.~\cite{HonEtAl} from~\cite{FerraginaEtAlXBW}, denoted $H_k^*$ in our paper, is not satisfactory: in particular, $H^*_k$ can be greater than $\log\sigma$, which contradicts the idea of the empirical entropy ($H_k^*$ was devised in~\cite{FerraginaEtAlXBW} for a slightly different problem); we propose a more appropriate definition for $H_k$, which is not worse since $H_k \le H_k^*$, and refine the analysis of Hon et al.~showing that their data structure indeed occupies $m H_k + 3.443 m + o(m) + O(d\log\frac{m}{d})$ bits even according to our definition of $H_k$. Further, our solution, unlike~\cite{HonEtAl}, does not require to fix $k$ and the space bound holds for all $k$ such that $0 \le k \le \max\{0,\alpha\log_\sigma m{-}2\}$ simultaneously; this is achieved by adapting the fixed block compression boosting technique~\cite{GogEtAl,KarkkainenPuglisi} to our construction, which is also better for practice than the compression boosting used in~\cite{HonEtAl}. Finally, we implement our data structure and conduct a number of experiments showing that it is comparable to the state of the art in terms of time and is superior in space.

The paper is organized as follows. In the following section we introduce some basic notions and define the $k$th-order empirical entropy of tries. In Section~\ref{sec:basic_algorithm} we survey the solution of Belazzougui~\cite{Belazzougui}. In Section~\ref{sec:compression_boosting} we discuss compression boosting techniques and investigate the flaws of the space analysis of Hon et al.~\cite{HonEtAl}. Section~\ref{sec:space_reduction} describes our main data structure and considers space optimality. Appendix contains implementation details and experiments.

\section{Preliminaries}
Throughout the paper, we mainly consider strings drawn from the alphabet $\{0, 1, \ldots, \sigma{-}1\}$ of size $\sigma$ (not necessarily constant). For a string $s = c_0 c_1 \cdots c_{n - 1}$, denote by $|s|$ its length $n$. The \emph{reverse} $c_{n-1}\cdots c_1 c_0$ of $s$ is denoted $s^r$. We write $s[i]$ for the letter $c_i$ of $s$ and $s[i..j]$ for the \emph{substring} $c_i c_{i + 1} \cdots c_j$, assuming $s[i..j]$ is the empty string if $i > j$. We say that a string $p$ \emph{occurs} in $s$ at position $i$ if $s[i..i{+}|p|{-}1] = p$. A string $p$ is called a \emph{prefix} (resp., \emph{suffix}) of $s$ if $p$ occurs in $s$ at position $0$ (resp., $|s| - |p|$). For integer segments, we use the following notation: $[i..j] = \{i, i{+}1, \ldots, j\}$, $(i..j] = [i..j] \setminus \{i\}$, $[i..j) = [i..j] \setminus \{j\}$. The set of all strings of lengths $k$ over an alphabet $A$ is denoted $A^k$; we use this notation for the set $[0..\sigma]^k$. For letter $c$, the only string of the set $\{c\}^k$ is denoted $c^k$.

The trie containing a set of strings $S$ is the minimal in the number of vertices rooted tree with edges labeled by letters so that each $s \in S$ can be spelled out on the path from the root to a vertex. For vertex $v$, denote the string written on the path from the root to $v$ by $\str{v}$.

The \emph{zeroth-order empirical entropy} (see~\cite{KosarajuManzini,Manzini}) of a string $t$ of length $n$ is defined as $H_0(t) = \sum_{c \in [0..\sigma)} \frac{n_c}{n} \log \frac{n}{n_c}$, where $n_c$ is the number of letters $c$ in $t$ and $\frac{n_c}{n}\log\frac{n}{n_c} = 0$ whenever $n_c = 0$. For a string $w$ of length $k$, let $t_w$ be a string formed by concatenating all letters immediately following occurrences of $w$ in the string $\$^k t$, where $\$ = \sigma$ is a new special letter introduced for technical convenience; e.g., $t_{ab} = aac$ and $t_{\$a} = b$ for $t = abababc$. The \emph{$k$th-order empirical entropy} of $t$ is defined as $H_k(t) = \sum_{w \in [0..\sigma]^k} \frac{|t_w|}{n} H_0(t_w)$ (see~\cite{KosarajuManzini,MakinenNavarro,Manzini}). It is well known that $\log\sigma \ge H_0 \ge H_1 \ge \cdots$ and $H_k$ makes sense as a measure of string compression only for $k < \log_\sigma n$ (see~\cite{Gagie} for a deep discussion). For the sake of completeness, let us show that $H_k \ge H_{k+1}$; this proof then can be easily adapted for the empirical entropy of tries below. Curiously, to our knowledge, all sources refer to this simple and intuitive fact as ``obvious'' but do not give a proof; even the original paper~\cite{Manzini}, the survey~\cite{MakinenNavarro}, and the book \cite{Navarro}.

\begin{lemma}[{see \cite[Lemma 3]{KarkkainenPuglisi}}]
For any strings $t_1, t_2, \ldots, t_\ell$ and the string $t = t_1 t_2 \cdots t_\ell$, we have $|t|H_0(t) \ge \sum_{i=1}^\ell |t_i|H_0(t_i)$.\label{lemma:entropies}
\end{lemma}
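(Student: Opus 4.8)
The plan is to reduce the general statement to the case $\ell = 2$ by an easy induction: if $|uv|H_0(uv) \ge |u|H_0(u) + |v|H_0(v)$ holds for any two strings $u, v$, then applying it repeatedly to $t = (t_1 \cdots t_{\ell-1}) t_\ell$ and unfolding the prefix yields the claim. So the heart of the matter is the two-string inequality. Writing $n_c$, $a_c$, $b_c$ for the number of occurrences of letter $c$ in $t = uv$, in $u$, and in $v$ respectively, so that $n_c = a_c + b_c$, $n = |u| + |v|$, $a = |u|$, $b = |v|$, the inequality $|t|H_0(t) \ge |u|H_0(u) + |v|H_0(v)$ becomes, after writing out the definition of $H_0$ and cancelling,
\[
\sum_{c \in [0..\sigma)} \Bigl( n_c \log n_c - a_c \log a_c - b_c \log b_c \Bigr) \;\le\; n \log n - a \log a - b \log b,
\]
with the convention $0 \log 0 = 0$.

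The key observation is that this is precisely a statement about convexity (equivalently, log-sum-inequality / superadditivity of the entropy of a mixture). Define $f(x,y) = (x+y)\log(x+y) - x\log x - y\log y$ for $x, y \ge 0$; the inequality above reads $\sum_c f(a_c, b_c) \le f\bigl(\sum_c a_c, \sum_c b_c\bigr)$, i.e., $f$ is superadditive on the nonnegative orthant. I would verify this by checking that $-f$ is convex: $f$ is homogeneous of degree $1$, and along any ray $f(x,y) = x\, g(y/x)$ with $g(r) = (1+r)\log(1+r) - r\log r$, whose second derivative is $g''(r) = \frac{1}{\ln 2}\bigl(\tfrac{1}{1+r} - \tfrac{1}{r}\bigr) < 0$, so $g$ is concave; a positively homogeneous degree-$1$ function that is concave along rays through the origin is concave on the whole orthant, and concave functions vanishing at the origin are superadditive. (Alternatively, one can invoke the log-sum inequality directly: $\sum_c a_c \log\frac{a_c}{a_c+b_c} \ge a \log\frac{a}{a+b}$ and symmetrically for the $b_c$'s; adding the two and rearranging gives exactly the displayed inequality.)

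The routine part is then just bookkeeping: substituting $H_0(s) = \frac{1}{|s|}\sum_c |s|_c \log\frac{|s|}{|s|_c} = \log|s| - \frac{1}{|s|}\sum_c |s|_c\log|s|_c$ into both sides, multiplying through by the lengths, and matching terms with the convexity inequality above, being a little careful about the $0\log 0$ conventions when some letter is absent from $u$ or from $v$. The main obstacle, if any, is purely presentational — making the algebraic reduction to the $f$-superadditivity statement clean — since the underlying convexity fact is standard. As promised in the text, the same argument carries over to the entropy of tries: there $t_w$ plays the role of $u$ and $v$ and one sums over the contexts $w$, so the tree case follows from the $\ell$-string version by grouping the context-strings.
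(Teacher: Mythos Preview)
The paper does not prove this lemma at all: it is quoted verbatim from K\"arkk\"ainen and Puglisi with a citation, and is then used as a black box. Your argument is therefore not competing with anything in the paper; it is the standard convexity/log-sum proof of the result, and it is correct.

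One phrasing point: the sentence ``a positively homogeneous degree-$1$ function that is concave along rays through the origin is concave on the whole orthant'' is vacuous as stated, since any positively $1$-homogeneous function is \emph{linear} along rays through the origin. What you actually use (and what is true) is that $f(x,y) = x\,g(y/x)$ is the perspective of $g$, and the perspective of a concave function is concave on $\{x>0\}$; together with $f(0,0)=0$ and $1$-homogeneity this gives superadditivity via $f(p+q) = 2f\bigl(\tfrac{p+q}{2}\bigr) \ge f(p)+f(q)$. Your alternative route through the log-sum inequality is cleaner and avoids this issue entirely; I would lead with that.
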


Since, without loss of generality, one can assume that $t_w = t_{a_0w} t_{a_1w} \cdots t_{a_\sigma w}$, where $a_0, \ldots, a_\sigma$ are all letters of $[0..\sigma]$, Lemma~\ref{lemma:entropies} directly implies the inequality $|t|H_k(t) = \sum_{w \in [0..\sigma]^k} |t_w| H_0(t_w) \ge \sum_{w \in [0..\sigma]^{k+1}} |t_w| H_0(t_w) = |t|H_{k+1}(t)$ and, hence, $H_k \ge H_{k+1}$.

By analogy, one can define the empirical entropy for tries (see also~\cite{FerraginaEtAlXBW}). Let $\mathcal{T}$ be a trie with $n$ edges over the alphabet $[0..\sigma)$. For a string $w$ of length $k$, denote by $\mathcal{T}_w$ a string formed by concatenating in an arbitrary order the letters on the edges $(u,v)$ of $\mathcal{T}$ (here $u$ is the parent of $v$) such that $w$ is a suffix of $\$^k\str{u}$; e.g., $\mathcal{T}_{\$^k}$ consists of all letters labeling the edges incident to the root. Then, the \emph{$k$th-order empirical entropy} of $\mathcal{T}$ is defined as $H_k(\mathcal{T}) = \sum_{w \in [0..\sigma]^k} \frac{|\mathcal{T}_w|}{n} H_0(\mathcal{T}_w)$. (Note that $\sum_{w \in [0..\sigma]^k} |\mathcal{T}_w| = n$.) Analogously to the case of strings, one can show that $\log\sigma \ge H_0(\mathcal{T}) \ge H_1(\mathcal{T}) \ge \cdots$ and $H_k(\mathcal{T})$ makes sense as a measure of compression only for $k < \log_\sigma n$. For the definition of the $k$th-order empirical entropy of tries as given by Hon et al.~\cite{HonEtAl}, see Section~\ref{sec:compression_boosting}.

\section{Basic Algorithm}
\label{sec:basic_algorithm}

Given a dictionary $\dict$ of $d$ patterns, the multiple pattern matching problem is to preprocess $\dict$ in order to efficiently find all occurrences of the patterns in an arbitrary given text. In this section we describe for this problem the classical Aho--Corasick solution~\cite{AhoCorasick} and its space-efficient version developed by Belazzougui~\cite{Belazzougui}.

The main component of the Aho--Corasick data structure is the trie $\tree{\dict}$ containing $\dict$. Each vertex $v$ of the trie is augmented with the following structures (see Figure~\ref{fig:automaton}):
\begin{itemize}
\item a flag $\flag{v}$ that indicates whether $\str{v} \in \dict$: it is so iff $\flag{v} = 1$;
\item a hash table $\trans{v}{\cdot}$ that, for each letter $c$, either maps $c$ to a vertex $u = \trans{v}{c}$ such that $\str{u} = \str{v}c$, or returns $\textsf{nil}$ if there is no such $u$;
\item a link $\link{v}$ to a vertex such that $\str{\link{v}}$ is the longest proper suffix of $\str{v}$ that can be spelled out on a root-vertex path ($\link{v}$ is undefined if $v$ is the root);
\item a link $\report{v}$ to a vertex such that $\str{\report{v}}$ is the longest proper suffix of $\str{v}$ that belongs to $\dict$, or $\report{v} = root$ if there is no such suffix.
\end{itemize}

It is well known that the described data structure allows us to find all $occ$ occurrences of the patterns from $\dict$ in text $T$ in $O(|T| + occ)$ time: we read $T$ from left to right maintaining a ``current'' vertex $v$ in $\tree{\dict}$ (initially, $v$ is the root) and, when a new letter $T[i]$ arrives, we put $v = \trans{u}{T[i]}$ for the first $u$ in the series $v, \link{v}, \link{\link{v}}, \ldots$ for which $\trans{u}{T[i]}$ is not $\mathsf{nil}$, or put $v = root$ if there is no such $u$, then we report all patterns ending at position $i$ using the report links and the flag $\flag{v}$ (see details in, e.g.,~\cite{CrochemoreRytter}).

\begin{figure}[htb]
\centering
\begin{subfigure}[t]{.55\textwidth}
\includegraphics[scale=0.34]{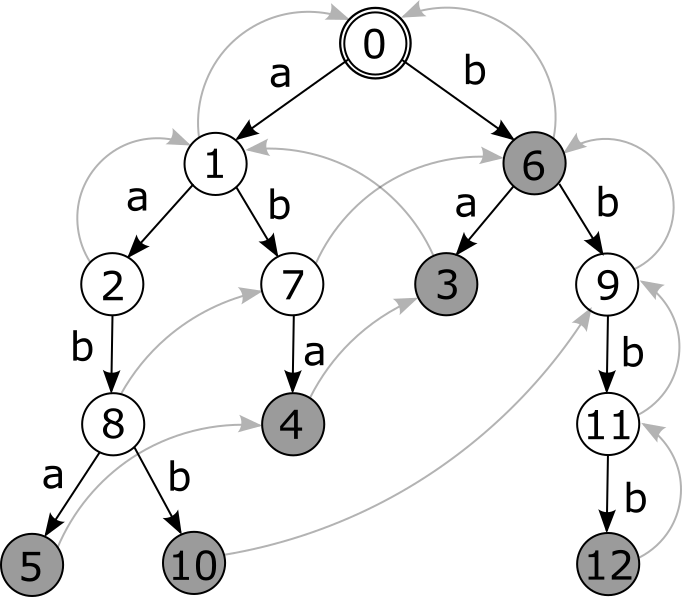}
\caption{The trie $\tree{\dict}$ and failure links; each vertex $v$ is numbered by $\num{v}$ and is shaded iff $\flag{v} = 1$.}
\label{fig:automaton}
\end{subfigure}
\hfill
\begin{subtable}[t]{.42\textwidth}
\vskip-56.6mm
\small
\setlength\tabcolsep{2.7pt}
\begin{tabular}{>{\small}r>{\small}r>{\small}c>{\small}c}
\num{v} & \str{v}  & \numdictgen{a} & \numdictgen{b} \\\cline{3-4}
0 &      & \multicolumn{1}{|>{\footnotesize}c}{1} & \multicolumn{1}{|>{\footnotesize}c|}{1} \\\cline{3-4}
1 & $a$    & \multicolumn{1}{|>{\footnotesize}c}{1} & \multicolumn{1}{|>{\footnotesize}c|}{1} \\\cline{3-4}
2 & $aa$   & \multicolumn{1}{|>{\footnotesize}c}{0} & \multicolumn{1}{|>{\footnotesize}c|}{1} \\\cline{3-4}
3 & $ba$   & \multicolumn{1}{|>{\footnotesize}c}{0} & \multicolumn{1}{|>{\footnotesize}c|}{0} \\\cline{3-4}
4 & $aba$  & \multicolumn{1}{|>{\footnotesize}c}{0} & \multicolumn{1}{|>{\footnotesize}c|}{0} \\\cline{3-4}
5 & $aaba$ & \multicolumn{1}{|>{\footnotesize}c}{0} & \multicolumn{1}{|>{\footnotesize}c|}{0} \\\cline{3-4}
6 & $b$    & \multicolumn{1}{|>{\footnotesize}c}{1} & \multicolumn{1}{|>{\footnotesize}c|}{1} \\\cline{3-4}
7 & $ab$   & \multicolumn{1}{|>{\footnotesize}c}{1} & \multicolumn{1}{|>{\footnotesize}c|}{0} \\\cline{3-4}
8 & $aab$  & \multicolumn{1}{|>{\footnotesize}c}{1} & \multicolumn{1}{|>{\footnotesize}c|}{1} \\\cline{3-4}
9 & $bb$   & \multicolumn{1}{|>{\footnotesize}c}{0} & \multicolumn{1}{|>{\footnotesize}c|}{1} \\\cline{3-4}
10 & $aabb$ & \multicolumn{1}{|>{\footnotesize}c}{0} & \multicolumn{1}{|>{\footnotesize}c|}{0} \\\cline{3-4}
11 & $bbb$  & \multicolumn{1}{|>{\footnotesize}c}{0} & \multicolumn{1}{|>{\footnotesize}c|}{1} \\\cline{3-4}
12 & $bbbb$ & \multicolumn{1}{|>{\footnotesize}c}{0} & \multicolumn{1}{|>{\footnotesize}c|}{0} \\\cline{3-4}
\end{tabular}
\caption{Encoding of transitions $\transdef$ in the trie $\tree{\dict}$ using the bit arrays $\numdictgen{a}$ and $\numdictgen{b}$.}
\label{fig:bitvec}
\end{subtable}
\caption{The trie $\tree{\dict}$ with $\dict = \{aaba, aabb, aba, b, ba, bbbb \}$.}
\label{fig:example}
\end{figure}

It is easy to see that the data structure occupies $O(m\log m)$ bits of space, where $m$ is the number of edges in $\tree{\dict}$. Let us describe now how Belazzougui could reduce the space consumption of this solution with no slowdown (see~\cite{Belazzougui} for a more detailed explanation).

First, he assigned to each vertex $v$ of $\tree{\dict}$ a unique number $\num{v} \in [0..m]$ so that, for any two vertices $u$ and $v$, we have $\num{u} < \num{v}$ iff $\str{u}^r < \str{v}^r$ lexicographically (see Figure~\ref{fig:automaton}). As it was pointed out by Hon et al.~\cite{HonEtAl}, this subtle numbering scheme corresponds to the numbering of vertices in the so-called XBW of $\tree{\dict}$ (see \cite{FerraginaEtAlXBW}), a generalization of the classical Burrows--Wheeler transform (BWT)~\cite{BurrowsWheeler} for tries. It turns out that the numbering allows us to organize fast navigation in the trie in small space, simulating the tables $\transdef$ in a manner that resembles the so-called FM-indexes~\cite{FerraginaManzini} based on the BWT.

For each letter $c \in [0..\sigma)$, define a bit array $\numdictgen{c}[0..m]$ such that, for any vertex $v$, we have $\numdictgen{c}[\num{v}] = 1$ iff $\trans{v}{c} \ne \mathsf{nil}$ (see Figure~\ref{fig:bitvec}). Let $\mathsf{rank}(i,\numdictgen{c})$ be an operation on $\numdictgen{c}$, called \emph{partial rank}, that returns $\mathsf{nil}$ if $\numdictgen{c}[i] \ne 1$, and returns the number of ones in $\numdictgen{c}[0..i]$ otherwise. By standard arguments, one can show that $\num{\trans{v}{c}} = \mathsf{rank}(\num{v},\numdictgen{c}) + e_{{<}c}$, provided $\trans{v}{c} \ne \mathsf{nil}$, where $e_{{<}c}$ is the number of edges in the trie with labels smaller than $c$ (see~\cite{Belazzougui,FerraginaEtAlXBW}). Let us concatenate $\numdictgen{0}, \numdictgen{1}, \ldots, \numdictgen{\sigma{-}1}$, thus obtaining a new bit array $\numdictdef$ of length $(m{+}1)\sigma$. Since $e_{{<}c}$ is equal to the number of ones in the arrays $\numdictgen{0}, \numdictgen{1}, \ldots, \numdictgen{c{-}1}$, we obtain $\num{\trans{v}{c}} = \mathsf{rank}(c(m{+}1) + \num{v}, \numdictdef)$ and $\mathsf{rank}(c(m{+}1) + \num{v}, \numdictdef) = \mathsf{nil}$ iff $\trans{v}{c} = \mathsf{nil}$. In order to support $\mathsf{rank}$ in $O(1)$ time, we equip $\numdictdef$ with the following data structure, which, in addition, supports the operation $\mathsf{select}(i,\numdictdef)$ that returns the position of the $i$th one in $\numdictdef$ ($\mathsf{select}$ is used below to compute the parent of $v$ by number $\num{v}$; see~\cite{Belazzougui}).

\begin{lemma}[see~\cite{RamanRamanRao}]
\label{lemma:rrr}
Every bit array of length $n$ with $m$ ones has an encoding that occupies $\log\binom{n}{m} + o(m)$ bits and supports select and partial rank queries in $O(1)$ time.
\end{lemma}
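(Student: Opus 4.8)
The plan is to combine two classical building blocks: a constant-time rank/select structure over a dense bit array, and the entropy-compressed representation of sparse bit arrays. Concretely, I would prove the lemma in two stages. First, recall that any bit array of length $N$ admits an $N + o(N)$-bit representation supporting $\mathsf{rank}$ and $\mathsf{select}$ in $O(1)$ time via the standard two-level directory of superblock and block counters together with a lookup table on blocks of length $\frac{1}{2}\log N$; this is the textbook Jacobson/Clark/Munro construction. This alone does not suffice here because we want the space to depend on the number $m$ of ones, not on the length $n$, so the second stage is to layer the Raman--Raman--Rao (RRR) scheme on top: partition the bit array into blocks of a fixed length $b = \Theta(\log n)$, encode each block by the pair (popcount $p$, index of the block within the class of all length-$b$ strings with $p$ ones) using $\lceil\log(b+1)\rceil + \lceil\log\binom{b}{p}\rceil$ bits, and store a directory of cumulative popcounts and cumulative bit offsets at superblock granularity.

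The key steps, in order, are: (i) bound the total payload length by $\sum_i \log\binom{b}{p_i} + O(\frac{n}{b}\log b)$ and invoke the log-sum / concavity inequality $\sum_i \log\binom{b}{p_i} \le \log\binom{n}{m}$ (which follows by Vandermonde's identity $\binom{n}{m} = \sum \prod \binom{b}{p_i}$, or equivalently Lemma~\ref{lemma:entropies} applied to the zeroth-order entropy) to get the payload down to $\log\binom{n}{m} + O(\frac{n}{b}\log b)$ bits; choosing $b = \Theta(\log n)$ makes the overhead $O(\frac{n}{\log n}\cdot\log\log n) = o(n)$, and a slightly more careful accounting (splitting blocks into ``sparse'' ones, where $\log\binom{b}{p_i}$ is tiny, and ``dense'' ones, which are few because they each carry $\Omega(\log n)$ bits of payload) pushes the overhead to $o(m)$ rather than merely $o(n)$ — this is the only genuinely delicate point; (ii) build the rank directory: store at each superblock the running count of ones and the running bit-offset into the payload, so a $\mathsf{rank}$ query locates the enclosing superblock in $O(1)$, then scans $O(1)$ blocks, decoding each via a precomputed table of size $2^{b}\cdot\mathrm{poly}(b) = o(n)$ bits that maps (class, index, within-block position) to a local rank; (iii) build the $\mathsf{select}$ directory symmetrically by sampling every $\Theta(\log n)$-th one and storing, for sparse and dense regions separately, either the explicit positions or a second-level directory, again à la Clark/RRR, so that $\mathsf{select}$ also takes $O(1)$; (iv) observe that the partial-rank variant needed here — return $\mathsf{nil}$ unless position $i$ holds a one — is obtained for free, since decoding the block at $i$ reveals the bit $\numdictdef[i]$ in $O(1)$.

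The main obstacle is step (i): getting the redundancy to be $o(m)$ instead of the easy $o(n)$. With $m$ possibly as small as $\sqrt{n}$ (or smaller), a naive directory with $\Theta(n/\log n)$ entries of $\Theta(\log n)$ bits each costs $\Theta(n)$ bits, which dominates. The standard fix, which I would follow, is to make the directory itself sparse: record superblock boundaries only implicitly and use a two-level sampling indexed by the ones (not by positions), so the number of stored pointers is $O(m/\log n)$, each of $O(\log n)$ bits, totalling $O(m)$ — and then a second round of the same trick, or the folklore ``recursion on the directory,'' brings it to $o(m)$. Everything else is routine bookkeeping; once the $o(m)$ redundancy is established the $O(1)$ time bounds for $\mathsf{select}$ and partial $\mathsf{rank}$ are immediate from the block-decoding table and the directories. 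I would cite~\cite{RamanRamanRao} for the full details rather than reproduce them.
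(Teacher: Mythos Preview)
The paper does not prove this lemma at all; it is stated with a bare citation to~\cite{RamanRamanRao} and used as a black box throughout. Your plan --- sketch the Raman--Raman--Rao construction and then cite~\cite{RamanRamanRao} for the details --- is therefore exactly what the paper does, and is appropriate.

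One technical caution on your sketch itself. The universe-based block decomposition you describe (fixed blocks of length $b=\Theta(\log n)$) has $\Theta(n/b)$ blocks, and merely recording the per-block popcounts $p_i$, which you must store to locate and decode the class-index pairs, already costs $\Theta((n/\log n)\log\log n)$ bits; this is not $o(m)$ when $m \ll n/\log n$. Your proposed remedy (sparsify the directory, sample by ones, recurse) fixes the \emph{directory} overhead but not this per-block class overhead. The indexable-dictionary construction in~\cite{RamanRamanRao} that actually attains $\log\binom{n}{m}+o(m)$ bits with $O(1)$ select and partial rank does not block the universe; it groups the $m$ \emph{elements} into $\Theta(m/\log m)$ buckets of $\Theta(\log m)$ consecutive ones each, encodes each bucket's gaps, and layers a small predecessor structure on the bucket minima for partial rank. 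That is the decomposition you would want to invoke. Since you explicitly defer the details to the citation, this does not invalidate your plan, but the intermediate sketch as written would not reach $o(m)$.
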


Since $\numdictdef$ contains exactly $m$ ones, the space occupied by $\numdictdef$ encoded as in Lemma~\ref{lemma:rrr} is $\log\binom{(m + 1)\sigma}{m} + o(m)$, which in~\cite{Belazzougui} was estimated by $m\log\sigma{+}1.443m{+}o(m)$ (here $1.443{\approx}\log e$). To encode $\flagdef$, Belazzougui constructs a bit array of length $m + 1$ containing exactly $d$ ones at positions $\num{v}$ for all $d$ vertices $v$ with $\flag{v} = 1$, and stores the array as in Lemma~\ref{lemma:rrr} (for this array, we need only access queries, which can be simulated by $\mathsf{rank}$), thus occupying $\log\binom{m+1}{d}{+}o(d) \le O(d\log\frac{m}{d})$ bits of space. It remains to encode failure and report links.

Belazzougui noticed that the failure links form a tree on the vertices of $\tree{\dict}$ in which, for vertex $v$, $\link{v}$ returns the parent of $v$; more importantly, the numbering $\numdef$ corresponds to the order of vertices in a depth first traversal of this tree. This allows us to represent the tree of failure links in $2m + o(m)$ bits using the following lemma.

\begin{lemma}[{see \cite{NavarroSadakane}}]
\label{lemma:succinct_tree}
Every tree with $m$ edges can be encoded in $2m + o(m)$ bits with the support of the operation $\mathsf{parent}(v)$, which returns the parent of vertex $v$, in constant time, provided all the vertices are represented by their numbers in a depth first traversal.
\end{lemma}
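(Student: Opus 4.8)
The plan is to use the textbook \emph{balanced-parentheses} (BP) encoding of ordered trees and to observe that the depth-first numbering required by the lemma is precisely the one this encoding induces. First I would run a depth-first traversal of the tree, emitting an opening parenthesis ``('' the first time a vertex is visited and a closing parenthesis ``)'' when the vertex is left for the last time; since the tree has $m$ edges and hence $m{+}1$ vertices, this yields a balanced string $P$ of $2(m{+}1) = 2m + O(1)$ parentheses, which we store as a bit array (say $1$ for ``('' and $0$ for ``)''). The order in which the vertices are first visited is the preorder of the traversal, so the vertex with depth-first number $v$ is the one whose opening parenthesis is the $(v{+}1)$-th occurrence of $1$ in $P$; thus its position in $P$ is located by a $\mathsf{select}$ query for the $(v{+}1)$-th $1$, while, conversely, the vertex whose opening parenthesis sits at position $p$ has number equal to one less than the number of $1$'s in $P[0..p]$, i.e.\ one less than a $\mathsf{rank}$ value.

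Next I would invoke the standard correspondence between tree structure and parenthesis nesting: the parent of the vertex whose opening parenthesis is at position $p$ is the vertex whose matching ``('' $\cdots$ ``)'' pair most tightly encloses $p$, and the position of that parent's opening parenthesis is returned by the operation $\mathsf{enclose}(p)$ on $P$ (which is undefined precisely when $p$ is the outermost ``('', matching the fact that the root has no parent). Hence $\mathsf{parent}(v)$ is computed by a $\mathsf{select}$ to locate $v$'s opening parenthesis, one call to $\mathsf{enclose}$, and a final $\mathsf{rank}$ to turn the resulting position back into a vertex number --- a constant number of primitive operations.

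It then remains to make these primitives run in $O(1)$ time using only $o(m)$ extra bits beside $P$. For $\mathsf{rank}$ and $\mathsf{select}$ over a bit array this is the folklore result of Jacobson, Clark, and Munro. For $\mathsf{enclose}$ --- together with the companion matching operations $\mathsf{findopen}$ and $\mathsf{findclose}$ --- this is exactly what the succinct BP toolkit delivers in $O(1)$ time with $o(m)$ additional bits, and I would cite it as a black box (see \cite{NavarroSadakane}, building on work of Munro--Raman and of Geary--Raman--Raman). Summing the pieces, the representation occupies $2m + o(m)$ bits and answers $\mathsf{parent}$ in constant time, as claimed. I do not expect a genuine obstacle: if one insisted on a fully self-contained argument, the only real work would be the $o(m)$-bit auxiliary index for $\mathsf{enclose}$ (built via the pioneer-parenthesis decomposition and a recursive mini-structure), and the single point that needs care in the reduction is the bookkeeping of conventions --- checking that the lemma's ``depth-first'' vertex numbers are the preorder ranks of the opening parentheses of $P$ --- which is immediate from the way $P$ is constructed.
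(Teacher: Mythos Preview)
Your proposal is correct: the balanced-parentheses encoding with $\mathsf{rank}$/$\mathsf{select}$/$\mathsf{enclose}$ is exactly the standard succinct-tree machinery underlying the cited result, and your reduction of $\mathsf{parent}$ to these primitives is the textbook one. The paper itself gives no proof of this lemma at all --- it is stated purely as a black box imported from~\cite{NavarroSadakane} --- so your sketch is not merely consistent with the paper's approach but strictly more detailed than what the paper provides.
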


An analogous observation is also true for the report links and the tree induced by them has at most $d$ internal vertices, which allows to spend only $O(d\log\frac{m}{d}) + o(m)$ bits for it.

\begin{lemma}[{see \cite{JanssonSadakaneSung}}]
\label{lemma:ultra_succinct_tree}
Every tree with $m$ edges and $d$ internal vertices can be encoded in $d\log\frac{m}{d} + O(d) + o(m)$ bits with the support of the operation $\mathsf{parent}(v)$ in constant time, provided all the vertices are represented by their numbers in a depth first traversal.
\end{lemma}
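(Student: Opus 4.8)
The plan is to obtain the lemma from the ultra-succinct ordered-tree representation of Jansson, Sadakane, and Sung~\cite{JanssonSadakaneSung}, which stores any ordered tree $T$ with $n$ vertices in $nH^*(T) + o(n)$ bits and supports $\mathsf{parent}$ --- along with a host of other navigational queries --- in $O(1)$ time, the vertices being addressed by their ranks in a depth-first traversal of $T$, exactly the numbering demanded by the lemma. Here $n_i$ is the number of vertices of $T$ having exactly $i$ children and $nH^*(T) = \sum_{i \ge 0} n_i \log\frac{n}{n_i}$ is the (unnormalised) degree entropy of $T$. Since $n = m + 1$, so $o(n) = o(m)$, everything boils down to showing that $nH^*(T) \le d\log\frac{m}{d} + O(d) + o(m)$ whenever $T$ has $m$ edges and $d$ internal vertices.

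To prove this, I would first peel off the $i = 0$ term of $nH^*(T)$: with $n_0 = n - d$ and $\log\frac{n}{n - d} \le \frac{d}{n - d}\log e$, that term is at most $d\log e = O(d)$. For the rest I would use the classical type-counting inequality $\binom{n}{n_0, n_1, n_2, \ldots} \ge 2^{nH^*(T)}/(n + 1)^{K}$, where $K$ is the number of distinct children-counts that occur in $T$; since $\sum_{i \ge 1} i\, n_i = m$ and $n_i \ge 1$ for every occurring positive $i$, those occurring positive values are distinct positive integers with sum at most $m$, hence $K \le \sqrt{2m} + 1$, which gives $nH^*(T) \le \log\binom{n}{n_0, n_1, n_2, \ldots} + O(\sqrt{m}\log m) = \log\binom{n}{n_0, n_1, n_2, \ldots} + o(m)$. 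Next, using $n_0 = n - d$ and $\sum_{i \ge 1} n_i = d$, I would factor the multinomial as $\binom{n}{n_0, n_1, n_2, \ldots} = \binom{n}{d}\binom{d}{n_1, n_2, \ldots}$ and observe that $\binom{d}{n_1, n_2, \ldots}$ counts exactly those compositions of $m$ into $d$ positive parts whose part-multiset is $\{1^{n_1}, 2^{n_2}, \ldots\}$, so it is at most the total number $\binom{m - 1}{d - 1}$ of compositions of $m$ into $d$ positive parts. This yields $nH^*(T) \le \log\binom{n}{d} + \log\binom{m - 1}{d - 1} + o(m) \le d\log\frac{e(m + 1)}{d} + (d - 1)\log\frac{e(m - 1)}{d - 1} + o(m) \le 2d\log\frac{m}{d} + O(d) + o(m)$.

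Finally, I would note that $2d\log\frac{m}{d} + O(d) + o(m)$ is, as a bound, indistinguishable from $d\log\frac{m}{d} + O(d) + o(m)$ over the whole range $1 \le d \le m$: if $d = \Theta(m)$ then $d\log\frac{m}{d} = O(d)$, while if $d = o(m)$ then writing $d = m/f$ with $f \to \infty$ gives $d\log\frac{m}{d} = \frac{m}{f}\log f = o(m)$ (and $m = 0$ is trivial). The hard part of the whole argument is the estimate on $nH^*(T)$: the obvious bounds $\sum_{i \ge 1} n_i \log\frac{n}{n_i} \le d\log n$ and the Jensen bound $d\log\frac{nK}{d}$ are much too weak --- both can reach $\Theta(m\log m)$ for trees whose internal vertices have many pairwise distinct degrees --- and what saves the argument is the composition inequality $\binom{d}{n_1, n_2, \ldots} \le \binom{m - 1}{d - 1}$, which trades an unwanted $d\log d$ for the benign $d\log\frac{m}{d}$. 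As a fallback, the lemma can also be proved from scratch, imitating the construction of~\cite{JanssonSadakaneSung}: store the subtree induced by the internal vertices via Lemma~\ref{lemma:succinct_tree}, and store, via Lemma~\ref{lemma:rrr}, a bit vector that marks for each internal vertex (in depth-first order) which of its children are internal, together with the prefix sums of the children-counts; the space analysis is the same as above.
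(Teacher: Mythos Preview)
The paper does not prove this lemma; it simply attributes it to Jansson--Sadakane--Sung. Your derivation of the bound $2d\log\frac{m}{d} + O(d) + o(m)$ from their degree-entropy representation $nH^*(T) + o(n)$ is correct and is exactly the intended route.

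The gap is your last step. The claim that $2d\log\frac{m}{d} + O(d) + o(m)$ and $d\log\frac{m}{d} + O(d) + o(m)$ are ``indistinguishable over the whole range $1 \le d \le m$'' is false. Your dichotomy ``$d = \Theta(m)$ or $d = o(m)$'' treats $d$ as a fixed function of $m$, but the lemma must hold \emph{uniformly} for every pair $(m,d)$ with one constant $C$ in the $O(d)$ and one sublinear function $g(m)$ in the $o(m)$. Concretely, for any fixed $C$ the choice $d = \lfloor m/(e\,2^{C})\rfloor$ gives
\[
d\log\frac{m}{d} - Cd \;=\; d\bigl(\log\tfrac{m}{d} - C\bigr) \;\ge\; d\log e \;=\; \frac{m\log e}{e\,2^{C}} - O(1) \;=\; \Theta(m),
\]
so the extra $d\log\frac{m}{d}$ cannot be absorbed into $O(d) + o(m)$.

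In fact the factor $2$ you obtained is unavoidable: the number of ordered trees with $m$ edges and $d$ internal vertices is the Narayana number $\frac{1}{m}\binom{m}{d}\binom{m}{d-1}$, whose logarithm is $2d\log\frac{m}{d} + \Theta(d) - O(\log m)$, and any encoding must spend at least that many bits in the worst case. So the lemma as printed seems to have dropped a factor of $2$. This is harmless for the paper --- every use of the lemma (for the report-link tree and inside Lemma~\ref{lemma:tree_compression}) only needs $O(d\log\frac{m}{d})$ or feeds the constant into the choice of $t = \Theta(\varepsilon^{-1}\log\varepsilon^{-1})$ --- but your attempt to recover the leading constant $1$ cannot succeed, and neither would your fallback construction, for the same counting reason.
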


Thus, the total space consumed by the succinct versions of the structures $\flagdef$, $\transdef$, $\linkdef$, and $\reportdef$ is $m\log\sigma + 3.443 m + o(m) + O(d\log\frac{m}{d})$ bits.

\section{Compression Boosting}
\label{sec:compression_boosting}

In~\cite{Belazzougui} it was noticed that if each array $\numdictgen{0}, \numdictgen{1}, \ldots, \numdictgen{\sigma{-}1}$ is encoded separately using Lemma~\ref{lemma:rrr}, then they altogether occupy $\sum_{0 \le c < \sigma} (\log\binom{m + 1}{n_c} + o(n_c))$ bits, where $n_c$ is the number of labels $c$ in $\tree{\dict}$, which is upper bounded by $\sum_{0 \le c < \sigma} (n_c\log\frac{m}{n_c} + 1.443 n_c) + o(m) = mH_0(\tree{\dict}) + 1.443m + o(m)$ (here we apply the inequalities $\log\binom{m + 1}{n} \le n\log\frac{m + 1}{n} + n\log e$ and $\log e < 1.443$, and estimate $n_c\log\frac{m + 1}{n_c}$ as $n_c\log\frac{m}{n_c} + o(n_c)$; see~\cite{Belazzougui}). Using such separated array encodings and some auxiliary data structures, one can reduce the space of the whole data structure to $m H_0(\tree{\dict}) + 3.443 m + o(m) + O(d\log\frac{m}{d})$ bits, provided $\sigma \le m^\delta$ for some constant $\delta < 1$. Hon et al.~\cite{HonEtAl} further developed this idea, applying the compression boosting technique~\cite{FerraginaEtAlBoosting}.

\subparagraph{Compression boosting}
Choose an integer $k$ such that $0 \le k \le \alpha\log_\sigma m-1$, where $\alpha \in (0,1)$ is an arbitrary fixed constant. For $i \in [0..m]$, denote by $\istr{i}$ the string $\str{v}$ such that $v$ is the vertex of $\tree{\dict}$ with $\num{v} = i$. For technical reasons, we introduce a special letter $\$ = \sigma$. Hon et al.~\cite{HonEtAl} partition the set $[0..m]$ into segments in which the strings $\istr{i}$, for $i$ from the same segment, have a common suffix of length $k$; then the subarrays of $\numdictgen{c}$ corresponding to these segments are encoded separately using Lemma~\ref{lemma:rrr}. More precisely, let $[\ell_\rho .. r_\rho]$, for $\rho \in [0..\sigma]^k$, be the set of all $i \in [0..m]$ such that $\rho$ is a suffix of $\$^k\istr{i}$ (the definition of $\numdef$ implies that this set forms a segment); each array $\numdictgen{c}[0..m]$, for $c \in [0..\sigma)$, is partitioned into the subarrays $\numdictgen{c}[\ell_{\rho} .. r_{\rho}]$, where $\rho \in [0..\sigma]^k$ (empty segments $[\ell_{\rho} .. r_{\rho}]$ are excluded), and each such subarray is encoded separately using Lemma~\ref{lemma:rrr}. Then, all the encoded subarrays in total occupy $\sum_{0\le c<\sigma} \sum_{\rho \in [0..\sigma]^k} (\log\binom{m_\rho}{n_{c,\rho}} + o(n_{c,\rho}))$ bits, where $m_\rho = r_\rho - \ell_\rho + 1$ and $n_{c,\rho}$ is the number of ones in the subarray $\numdictgen{c}[\ell_\rho .. r_\rho]$. Hon et al.~upper bound this sum by $m H_k^*(\tree{\dict}) + 1.443m + o(m)$, where $H_k^*(\tree{\dict}) = \frac{m + \ell}{m + 1}H_k(\treedef^*)$ is their definition of the $k$th-order empirical entropy (see~\cite{HonEtAl}; for simplicity, we use our notation $H_k$ to define $H_k^*$) in which $\ell$ is the number of leaves in $\tree{\dict}$ and $\treedef^*$ is the trie obtained from $\tree{\dict}$ by attaching to each leaf an outgoing edge labeled with $\$$. Note that $H_k^*(\tree{\dict}) \ge H_k(\treedef^*)$.

We believe that the definition of $H_k^*$ by Hon et al.~is not satisfactory. The problem is that the inequality $H_k^*(\tree{\dict}) \le \log\sigma$ (and even $H_k(\treedef^*) \le \log\sigma$), which seems to be natural for any proper definition of the empirical entropy, does not necessarily hold; as a corollary, according to the analysis of Hon et al., the encoding of $\transdef$ occupying $m\log\sigma + 1.443m$ bits in the Belazzougui's data structure might ``grow'' after compression up to $m H_k^*(\tree{\dict}) + 1.443m$ bits (however, there is no growth in reality, just the upper bound of Hon et al.~is too rough). For instance, one can observe such behavior in the trie $\treedef$ of all strings of length $h$ over the alphabet $\{0,1\}$: while it is straightforward that $H_1(\treedef) = \log\sigma = 1$, it can be shown that $H_1(\treedef^*) \approx \log 3$ since, for $c \in \{0,1\}$, the string $\treedef^*_c$ consisting of all labels in $\treedef^*$ with ``context'' $c$ contains roughly $m / 4$ of each of the letters $0$, $1$, and $\$$, where $m = 2^{h+1} - 2$ is the number of edges in $\treedef$ (we omit further details as they are straightforward).

For brevity, let us denote the summations $\sum_{\rho \in [0..\sigma]^k}$ and $\sum_{0 \le c < \sigma}$ in this paragraph by $\sum_\rho$ and $\sum_c$, respectively. For the sake of completeness, we show that the compression boosting technique allows to achieve the $k$th-order empirical entropy $H_k(\tree{\dict})$, which, unlike $H_k^*$, satisfies the inequality $\log\sigma \ge H_k(\tree{\dict})$, i.e., we are to prove that $\sum_c \sum_\rho (\log\binom{m_\rho}{n_{c,\rho}} + o(n_{c,\rho})) \le mH_k(\tree{\dict}) + 1.443m + o(m)$ (but we do not discuss additional structures of Hon et al. required for navigation; see~\cite{HonEtAl}). First, $\sum_c \sum_\rho (\log\binom{m_\rho}{n_{c,\rho}} + o(n_{c,\rho}))$ is upper bounded by $\sum_\rho \sum_c (n_{c,\rho}\log\frac{m_\rho}{n_{c,\rho}} + 1.443n_{c,\rho}) + o(m) = \sum_\rho \sum_c n_{c,\rho}\log\frac{m_\rho}{n_{c,\rho}} + 1.443m + o(m)$. Denote $n_\rho = \sum_c n_{c,\rho}$. Note that, by definition, we have $m H_k(\tree{\dict}) = \sum_\rho \sum_c n_{c,\rho}\log\frac{n_\rho}{n_{c,\rho}}$. Since the function $\log x$ is concave, we have $\log(x + d) \le \log x + d(\log x)' = \log x + d\log e / x$, for any $x > 0$ and any real $d$ such that $x + d > 0$. Hence, we deduce $\sum_\rho \sum_c n_{c,\rho}\log\frac{m_\rho}{n_{c,\rho}} = \sum_\rho \sum_c n_{c,\rho}\log(\frac{n_\rho}{n_{c,\rho}} + \frac{m_\rho - n_\rho}{n_{c,\rho}}) \le \sum_\rho \sum_c n_{c,\rho}(\log\frac{n_\rho}{n_{c,\rho}} + \frac{m_\rho - n_\rho}{n_{\rho}} \log e) = \sum_\rho \sum_c n_{c,\rho}\log\frac{n_\rho}{n_{c,\rho}} + \sum_\rho (m_\rho - n_\rho) \log e = m H_k(\tree{\dict}) + \log e$; the equality $\sum_\rho (m_\rho - n_\rho) \log e = \log e$ holds since $\sum_\rho m_\rho = m + 1$ and $\sum_\rho n_\rho = m$. Finally, we hide the constant $\log e$ under $o(m)$ and obtain the $k$th-order entropy compression: $\sum_c \sum_\rho (\log\binom{m_\rho}{n_{c,\rho}} + o(n_{c,\rho})) \le m H_k(\tree{\dict}) + 1.443m + o(m)$.

\subparagraph{Fixed block compression boosting}
The described partition lacks uniformity and requires relatively complex auxiliary data structures in order to support navigation and queries. Hon et al.~\cite{HonEtAl} indeed organize such an infrastructure using $o(m)$ bits, provided $\sigma \le m^\delta$ for a constant $\delta < 1$ (the condition $0 \le k \le \alpha\log_\sigma m - 1$ plays its role in this part). But it turns out that we can considerably simplify their whole construction using the fixed block boosting by K{\"a}rkk{\"a}inen and Puglisi~\cite{KarkkainenPuglisi}. The approach relies on the following lemma.

\begin{lemma}[{see~\cite[Lemma 4]{KarkkainenPuglisi}}]
\label{lemma:fixed_length}
Let $s = s_1 s_2 \cdots s_{\ell}$ be an arbitrary partition of a string $s$ into $\ell$ substrings. Let $s = s'_1 s'_2 \cdots s'_t$ be a different partition of $s$ into $t$ substrings each of which has length at most $b$. Then, we have
$\sum_{i=1}^t |s'_i| H_0(s'_i) \le \sum_{i=1}^\ell |s_i|H_0(s_i) + (\ell - 1) b$.
\end{lemma}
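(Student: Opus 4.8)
The plan is to reduce everything to Lemma~\ref{lemma:entropies} plus a counting argument that bounds how many ``fine'' blocks $s'_i$ can straddle a boundary between two consecutive ``coarse'' blocks $s_j$ and $s_{j+1}$. First I would refine the partition $s = s'_1 \cdots s'_t$ by cutting each $s'_i$ at every position where a boundary of the coarse partition $s = s_1 \cdots s_\ell$ falls strictly inside $s'_i$; call the resulting common refinement $s = u_1 u_2 \cdots u_p$. By construction, $u_1, \ldots, u_p$ is a partition that refines both given partitions, so two applications of Lemma~\ref{lemma:entropies} are available: on the one hand $\sum_{i=1}^t |s'_i| H_0(s'_i) \le \sum_{j=1}^p |u_j| H_0(u_j)$ is \emph{false} in that direction, so instead I would apply Lemma~\ref{lemma:entropies} the other way, grouping the $u_j$ according to which coarse block $s_\ell$ they lie in: since each $s_\ell$ is a concatenation of some consecutive $u_j$'s, we get $\sum_{j} |u_j| H_0(u_j) \ge$ nothing useful either.

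The correct route is: each $s'_i$ is a concatenation of consecutive $u_j$'s, hence by Lemma~\ref{lemma:entropies} applied blockwise, $|s'_i| H_0(s'_i) \ge \sum_{u_j \subseteq s'_i} |u_j| H_0(u_j)$; summing, $\sum_i |s'_i| H_0(s'_i) \ge \sum_j |u_j| H_0(u_j)$ — still the wrong direction. So I abandon the common refinement and instead argue directly. Write each fine block $s'_i$ as $s'_i = x_i y_i$ where $x_i$ is the (possibly empty) prefix of $s'_i$ lying in one coarse block and $y_i$ the remaining suffix lying in the next coarse block, using the key fact that since $|s'_i| \le b$, the block $s'_i$ overlaps \emph{at most two} consecutive coarse blocks. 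Now the multiset of all the pieces $\{x_i\} \cup \{y_i\}$ (dropping empty ones) is exactly a refinement of the coarse partition $s_1, \ldots, s_\ell$: every coarse block $s_j$ is a concatenation of some of these pieces in order. Hence by Lemma~\ref{lemma:entropies}, $\sum_i \big(|x_i| H_0(x_i) + |y_i| H_0(y_i)\big) \le \sum_{j=1}^\ell |s_j| H_0(s_j)$. It remains to bound $|s'_i| H_0(s'_i)$ from above in terms of $|x_i| H_0(x_i) + |y_i| H_0(y_i)$ plus an error: since concatenating two strings can increase the weighted entropy, but only in a controlled way, I would invoke the elementary bound $|xy|H_0(xy) \le |x|H_0(x) + |y|H_0(y) + |xy|\cdot 1$... no — rather, the sharp statement that for any split $w = xy$, $|w|H_0(w) \le |x|H_0(x) + |y|H_0(y) + |w|$ is too weak (it costs $\sum |s'_i| \le |s| = \sum|s_i|$, not $(\ell-1)b$).

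So the true accounting must be: only those $s'_i$ that \emph{actually straddle} a coarse boundary incur a penalty, and there are at most $\ell - 1$ such straddling blocks (one per internal coarse boundary, since a fine block of length $\le b$ can contain at most one internal coarse boundary in its interior — here I should be a little careful and allow that a fine block of length $\le b$ sitting exactly at a boundary contains zero internal boundaries, and a genuinely straddling one contains exactly one). For a non-straddling $s'_i$ we have $s'_i = x_i$ (or $= y_i$) and no loss. For each of the $\le \ell-1$ straddling blocks, use $|s'_i|H_0(s'_i) \le |x_i|H_0(x_i) + |y_i|H_0(y_i) + |s'_i| \le |x_i|H_0(x_i) + |y_i|H_0(y_i) + b$, where the first inequality is the standard subadditivity of weighted entropy under concatenation (which follows, e.g., from $|w|H_0(w)$ equalling the worst-case optimal prefix-code length up to $O(1)$, or directly from convexity of $x\log x$). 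Summing over all $i$ and combining with the Lemma~\ref{lemma:entropies} bound on the pieces gives $\sum_i |s'_i|H_0(s'_i) \le \sum_j |s_j|H_0(s_j) + (\ell-1)b$, as claimed.

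The main obstacle I anticipate is pinning down the concatenation inequality $|xy|H_0(xy) \le |x|H_0(x) + |y|H_0(y) + \min\{|x|,|y|\}$ (or with $|xy|$, whichever is cleanest) with the right constant so that the per-straddling-block penalty is $\le b$ rather than something larger; this is where a careless bound would give $(\ell-1)\cdot O(b\log\sigma)$ instead of $(\ell-1)b$. The cleanest way is probably to go through character counts directly: $|xy|H_0(xy) = \sum_c (n_c^x + n_c^y)\log\frac{|x|+|y|}{n_c^x+n_c^y}$, and compare termwise with $|x|H_0(x)+|y|H_0(y)$ using concavity of $\log$, tracking that the total slack is at most $|xy|$ and then observing it is also at most $b$ since $|s'_i|\le b$. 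Once that lemma is in hand the rest is bookkeeping about which blocks straddle and applying Lemma~\ref{lemma:entropies} once.
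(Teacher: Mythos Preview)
The paper does not give its own proof of this lemma; it is simply quoted from K\"arkk\"ainen and Puglisi. So there is no in-paper argument to compare against, and your proposal has to stand on its own.

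Your overall architecture is the right one---pass to the common refinement, use Lemma~\ref{lemma:entropies} on the coarse side, and pay a controlled penalty to re-merge refinement pieces into the fine blocks $s'_i$---but there is a genuine gap in the execution. You assert that ``since $|s'_i| \le b$, the block $s'_i$ overlaps at most two consecutive coarse blocks'' and, equivalently later, that ``a fine block of length $\le b$ can contain at most one internal coarse boundary in its interior.'' Neither is true: the hypothesis places no lower bound on the lengths $|s_j|$, so a single $s'_i$ may swallow arbitrarily many coarse blocks (take all $s_j$ to be single letters, for instance). Your two-piece split $s'_i = x_i y_i$ is therefore not well defined in general, and the count ``at most $\ell-1$ straddling blocks, each costing $\le b$'' does not go through as stated.

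The repair is small. Let $k_i \ge 1$ be the number of coarse blocks meeting $s'_i$, so that $s'_i$ is cut into $k_i$ pieces in the common refinement. Each of the $\ell-1$ internal coarse boundaries lies in the interior of at most one fine block, hence $\sum_i (k_i - 1) \le \ell - 1$. Now iterate your two-way bound $|xy|H_0(xy) \le |x|H_0(x) + |y|H_0(y) + |xy|$ exactly $k_i - 1$ times inside each $s'_i$; every intermediate concatenation is a substring of $s'_i$ and so has length at most $b$, whence block $i$ contributes a penalty of at most $(k_i-1)b$. Summing, the total penalty is at most $b\sum_i(k_i-1) \le (\ell-1)b$, and combining with Lemma~\ref{lemma:entropies} applied to each coarse block finishes the proof. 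Your concatenation inequality itself is correct and is exactly the mutual-information bound you allude to: $|w|H_0(w) - |x|H_0(x) - |y|H_0(y)$ equals $|w|$ times the mutual information between the character and the two-valued ``which half'' indicator, which is at most $|w| \cdot 1$.
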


We encode the bit array $\numdictdef$, which represents the transitions $\transdef$, as follows.

\begin{lemma}
\label{lemma:compression_boosting}
Provided $\sigma \le m^\delta$ for a constant $\delta < 1$, $\numdictdef$ has an encoding that supports select and partial rank in $O(1)$ time and occupies $mH_k(\tree{\dict}) + 1.443m + o(m)$ bits si\-mul\-ta\-ne\-ous\-ly for all $k \in [0..\max\{0,\alpha\log_\sigma m{-}2\}]$, where $\alpha \in (0,1)$ is an arbitrary fixed constant.
\end{lemma}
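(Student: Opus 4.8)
The plan is to apply the fixed block boosting of Lemma~\ref{lemma:fixed_length} to the bit array $\numdictdef$, partitioning it into blocks of a fixed length $b$ chosen so that $b = \Theta(\sigma \log_\sigma m)$, and then encoding each block with Lemma~\ref{lemma:rrr}. Concretely, I would first recall from the compression boosting paragraph that $\numdictdef = \numdictgen{0}\numdictgen{1}\cdots\numdictgen{\sigma-1}$, so it suffices to reason about each $\numdictgen{c}$ separately and then sum over $c$; for a fixed $k$, the entropy-optimal partition of $\numdictgen{c}$ is into the context segments $\numdictgen{c}[\ell_\rho..r_\rho]$ for $\rho \in [0..\sigma]^k$, of which there are at most $(\sigma+1)^k \le m^{\alpha}/\sigma$ nonempty ones (using $k \le \alpha\log_\sigma m - 2$, which gives $(\sigma+1)^k \le (2\sigma)^k \le \sigma^{2k} \le \sigma^{2(\alpha\log_\sigma m - 2)} \le m^{2\alpha}/\sigma^4$; I would pin down the exact exponent bookkeeping here but the point is that the number of segments times $b$ stays $o(m/\sigma)$ per symbol). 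Then Lemma~\ref{lemma:fixed_length} applied to $\numdictgen{c}$, with $s_i$ the context segments and $s'_i$ the fixed blocks of length $b$, yields that the fixed-block $H_0$-cost exceeds the context $H_0$-cost by at most (number of segments $-1$)$\cdot b$; summing over $c$ and invoking the already-established bound $\sum_c\sum_\rho n_{c,\rho}\log\frac{m_\rho}{n_{c,\rho}} \le mH_k(\tree{\dict}) + \log e$ from the previous paragraph, the fixed-block total is $mH_k(\tree{\dict}) + O(1) + \sigma\cdot(\text{\#segments})\cdot b = mH_k(\tree{\dict}) + o(m)$.

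Next I would convert the $H_0$-per-block accounting into an actual bit count. Encoding a length-$b$ block with $n_j$ ones via Lemma~\ref{lemma:rrr} costs $\log\binom{b}{n_j} + o(n_j) \le n_j\log\frac{b}{n_j} + 1.443 n_j + o(n_j)$, i.e.\ $|s'_j|H_0(s'_j) + 1.443 n_j + o(n_j)$ after pairing the block with its complementary counts as in the boosting paragraph; summing the $1.443 n_j$ terms over all blocks and all $c$ gives exactly $1.443 m$ (since $\numdictdef$ has $m$ ones total), and the $o(n_j)$ terms sum to $o(m)$. Crucially, because the block boundaries are at fixed multiples of $b$ and do \emph{not} depend on $k$, the \emph{same} encoding simultaneously satisfies the bound $mH_k(\tree{\dict}) + 1.443m + o(m)$ for every $k \in [0..\max\{0,\alpha\log_\sigma m - 2\}]$: for each such $k$ we rerun the argument above with that $k$'s context partition as the reference partition $s_1\cdots s_\ell$ in Lemma~\ref{lemma:fixed_length}, and the number of blocks is the same. (For the degenerate case $\max\{0,\alpha\log_\sigma m - 2\} = 0$ the statement is just the $H_0$ bound already in Belazzougui's analysis.)

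The remaining work is to supply the $O(1)$-time select and partial rank on top of this blocked encoding within $o(m)$ extra bits, and this is the step I expect to be the main obstacle. The standard trick is a two-level directory: store, for each block, the cumulative number of ones before it and a pointer into the concatenated block encodings; there are $(m+1)\sigma/b = O(m/\log_\sigma m \cdot \text{polylog})$ blocks, and if $b = \Theta(\sigma\log_\sigma m)$ is chosen large enough (which is where the hypothesis $\sigma \le m^\delta$, $\delta<1$, is used, to guarantee $\log_\sigma m = \Theta(\log m/\log\sigma)$ is $\omega(1)$ and hence $b = \omega(1)$), each directory entry takes $O(\log m)$ bits for a total of $O(m\log m / b) = O(m\log\sigma/\log_\sigma m \cdot \log m)$... which I would need to bound as $o(m)$ — this forces $b$ to grow slightly faster, say $b = \Theta(\log^2 m)$ when $\sigma$ is small, so I would set $b = \max\{\sigma, \log m\}\cdot\lceil\log m\rceil$ or similar and recheck that (\#segments)$\cdot b \cdot \sigma = o(m)$ still holds given $k \le \alpha\log_\sigma m - 2$. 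Partial rank within a block reduces to a partial rank on a block of length $b = m^{o(1)}$, handled by a universal lookup table of size $2^{o(\log m)}\cdot\mathrm{poly}(b) = o(m)$ bits; select is symmetric, using an additional sparse directory over the ones. I would verify that all these auxiliary structures total $o(m)$ bits and answer in $O(1)$ time, completing the proof. The delicate point throughout is the simultaneous choice of $b$: it must be small enough that $(\sigma+1)^k\cdot b\cdot\sigma = o(m)$ for all admissible $k$, yet large enough that the block directory is $o(m)$ — I expect this balancing, together with carefully tracking where $\sigma \le m^\delta$ enters, to be the crux.
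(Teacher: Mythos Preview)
Your high-level plan---fixed block boosting with Lemma~\ref{lemma:fixed_length}, plus a two-level directory for select/partial rank---is exactly the paper's approach. But two steps in your execution do not go through as written.

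\textbf{Applying Lemma~\ref{lemma:fixed_length}.} You propose to apply the lemma to each bit array $\numdictgen{c}$ separately, and then write that $\log\binom{b}{n_j} \le n_j\log\tfrac{b}{n_j} + 1.443n_j$ equals ``$|s'_j|H_0(s'_j) + 1.443n_j$''. But $n_j\log\tfrac{b}{n_j}$ is \emph{not} the $H_0$ of any string---for a bit string of length $b$ with $n_j$ ones, $bH_0 = n_j\log\tfrac{b}{n_j} + (b-n_j)\log\tfrac{b}{b-n_j}$---so Lemma~\ref{lemma:fixed_length} does not apply to the one-sided quantity you have isolated. The paper avoids this by a different decomposition: it first uses the concavity bound $\log(x+d)\le\log x + d\log e/x$ to replace $n_{c,i}\log\tfrac{b_i}{n_{c,i}}$ by $n_{c,i}\log\tfrac{n_i}{n_{c,i}}$ (where $n_i=\sum_c n_{c,i}$), picking up only an additive $\log e$ in total. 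Then $\sum_c n_{c,i}\log\tfrac{n_i}{n_{c,i}}$ \emph{is} exactly $|s'_i|H_0(s'_i)$ for the string $s'_i$ over the alphabet $[0..\sigma)$ consisting of all edge labels whose parent lies in block~$i$, and Lemma~\ref{lemma:fixed_length} is applied to this $\sigma$-ary string, not to the bit arrays. (Your per-$c$ route can also be made to work if you keep the full binary entropy and bound the ``zeros'' term $(m_\rho-n_{c,\rho})\log\tfrac{m_\rho}{m_\rho-n_{c,\rho}}$ by $n_{c,\rho}\log e$ afterwards, but that is not what you wrote.)

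\textbf{Choice of $b$ and segment count.} Neither of your candidate block sizes makes the directory $o(m)$: with $b=\Theta(\sigma\log_\sigma m)$ the number of blocks is $\sigma t=\Theta(m/\log_\sigma m)$ and the directory costs $\Theta(m\log\sigma)$ bits; with $b=\max\{\sigma,\log m\}\cdot\log m$ and $\sigma\ge\log m$ the directory is still $\Theta(m)$. The paper takes $b=\sigma\lceil\log^2 m\rceil$, giving $\sigma t=O(m/\log^2 m)$ and directory size $O(m/\log m)=o(m)$; this is where $\sigma\le m^\delta$ is used (so that a single block length is still $m^{o(1)}$). Separately, your bound $(\sigma+1)^k\le\sigma^{2k}\le m^{2\alpha}/\sigma^4$ on the number of nonempty context segments is too weak for $\alpha\ge 1/2$; the paper uses the sharper observation that every nonempty context has the form $\$^j w$ with $w\in[0..\sigma)^{k-j}$, so there are at most $\sum_{i\le k}\sigma^i\le 2\sigma^k$ of them, and then $2\sigma^k\cdot\max_i|s'_i|\le 2\sigma^{k+1}b\le 2m^\alpha\log^2 m=o(m)$ for all $k\le\alpha\log_\sigma m-2$.
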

\begin{proof}
We first discuss a fixed block encoding of the arrays $\numdictgen{c}$ and prove, by means of Lemma~\ref{lemma:fixed_length}, that it achieves the $k$th-order entropy compression. Then, we describe auxiliary structures that occupy only $o(m)$ bits and are used for queries and navigation in the blocks.

We partition each array $\numdictgen{c}[0..m]$ into $t = \lceil\frac{m + 1}{b}\rceil$ blocks of length $b = \sigma\lceil\log^2 m\rceil$ (the last block can be shorter), encode each block using Lemma~\ref{lemma:rrr}, and concatenate the encodings. Thus, we consume $\sum_{i=1}^t \sum_{0 \le c < \sigma} (\log\binom{b_i}{n_{c,i}} + o(n_{c,i}))$ bits, where $n_{c,i}$ is the number of ones in the block $\numdictgen{c}[(i{-}1)b .. \min\{ib{-}1,m\}]$ and $b_i$ is the length of the $i$th block (so that $b_i = b$, for $i \in [0..t)$, and $b_t \le b$). Since $\sum_{i=1}^t \log\binom{b_i}{n_{c,i}} \le \log\binom{m+1}{n_c}$ for $n_c = \sum_{i=1}^t n_{c,i}$, the result trivially holds for $k = 0$. The sum is upper bounded by $\sum_{i=1}^t \sum_{0 \le c < \sigma} n_{c,i}\log\frac{b_i}{n_{c,i}} + 1.443 m + o(m)$. Let us prove that $\sum_{i=1}^t \sum_{0 \le c < \sigma} n_{c,i}\log\frac{b_i}{n_{c,i}} \le m H_k(\tree{\dict}) + o(m)$ for all $k \in (0..\alpha\log_\sigma m{-}2]$.

Fix $k \in (0..\alpha\log_\sigma m{-}2]$. Denote $n_i = \sum_{0 \le c < \sigma} n_{c,i}$. Using the inequality $\log(x + d) \le \log x + d \log e / x$, we deduce the following upper bound:

$$
\sum_{\substack{1 \le i \le t\\0 \le c < \sigma}} n_{c,i}\log\frac{b_i}{n_{c,i}} 
= \sum_{\substack{1 \le i \le t\\0 \le c < \sigma}} n_{c,i}\log(\frac{n_i}{n_{c,i}} + \frac{b_i - n_i}{n_{c,i}}) \le \sum_{\substack{1 \le i \le t\\0 \le c < \sigma}} n_{c,i}\log\frac{n_i}{n_{c,i}} + \sum_{i=1}^t (b_i - n_i) \log e.$$

First, we have $\sum_{i=1}^t (b_i - n_i) \log e = \log e$ since $\sum_{i=1}^t b_i = m + 1$ and $\sum_{i=1}^t n_i = m$. Second, $\sum_{i=1}^t \sum_{0 \le c < \sigma} n_{c,i}\log\frac{n_i}{n_{c,i}} = \sum_{i=1}^t|s'_i|H_0(s'_i)$, where $s'_i$ is a string of length $n_i$ formed by concatenating the letters on the edges $(u,v)$ such that $\num{u}$ is inside the $i$th block, i.e., $\num{u} \in [(i{-}1)b~..~(i{-}1)b + b_i)$. For $\rho \in [0..\sigma]^k$, let $[\ell_\rho .. r_\rho]$ be the set of all $i$ such that $\rho$ is a suffix of $\$^k\istr{i}$, and let $s_\rho$ be a string formed by concatenating the letters on the edges $(u,v)$ such that $\num{u} \in [\ell_\rho .. r_\rho]$. By definition, $\sum_{\rho \in [0..\sigma]^k} |s_\rho| H_0(s_\rho) = m H_k(\tree{\dict})$. Note that at most $\ell = \sum_{i=0}^k \sigma^i = \frac{\sigma}{\sigma - 1}(\sigma^k - \frac{1}{\sigma})$ strings $s_\rho$ are nonempty and $\ell \le 2\sigma^k$ since $\sigma \ge 2$. Let $\rho_1, \rho_2, \ldots, \rho_{(\sigma + 1)^k}$ be an ordering of all strings $\rho \in [0..\sigma]^k$ such that $\ell_{\rho_1} \le \cdots \le \ell_{\rho_{(\sigma + 1)^k}}$. The definitions of $s'_i$ and $s_\rho$ imply that the letters in $s'_i$ and $s_\rho$ can be arranged so that $s'_1 s'_2 \cdots s'_t = s_{\rho_1} s_{\rho_2} \cdots s_{\rho_{(\sigma + 1)^k}}$. Therefore, by Lemma~\ref{lemma:fixed_length}, we obtain the next inequality:

$$\sum_{i=1}^t |s'_i| H_0(s'_i) \le \sum_{\rho \in [0..\sigma]^k} |s_\rho| H_0(s_\rho) + (\ell - 1) \max_{i \in [1..t]} n_i \le m H_k(\tree{\dict}) + 2\sigma^{k+1} b.$$

As $k \le \alpha\log_\sigma m-2$, we have $\sigma^{k+1}b = \sigma^{k+2}\lceil\log^2 m\rceil \le m^{\alpha}\lceil\log^2 m\rceil = o(m)$.

It remains to describe the auxiliary data structures that help to answer select and partial rank queries on the (now virtual) array $\numdictdef$. First, we store $\sigma t$ pointers to the data structures encoding the blocks $\numdictgen{c}[(i{-}1)b .. \min\{ib{-}1, m\}]$, for $c \in [0..\sigma)$ and $i \in [1..t]$. For rank, we create an array of length $\sigma t$ that stores the number of ones in the subarrays $\numdictdef[0..(m + 1)c + ib - 1]$, for $i \in [0..t)$ and $c \in [0..\sigma)$. All this takes $O(\sigma t \log m) = O(\frac{m}{\log^2 m} \log m) = o(m)$ bits. For select, we create a bit array $S$ formed by concatenating unary encodings for the number of ones in the blocks: e.g., if the first four blocks (of all $\sigma t$ blocks) contain, resp., $3$, $2$, $0$, and $2$ ones, then $S = 11101100110\cdots$; $S$ is encoded using Lemma~\ref{lemma:rrr} and, thus, occupies $\log\binom{m + \sigma t}{m} + o(m) = \log\binom{m + \sigma t}{\sigma t} + o(m) \le O(\sigma t\log m) + o(m) = o(m)$ bits. Using these structures, one can straightforwardly perform select and partial rank on $\numdictdef$ in $O(1)$ time.
\end{proof}

\section{Main Data Structure}
\label{sec:space_reduction}

The encoding of failure links imposes a $2m$-bit overhead, which, for small alphabet or highly compressible data, might be comparable to the space required for other structures. In this section, we describe a different encoding that uses only $\varepsilon m + o(m)$ bits, for any $\varepsilon \in (0, 2)$.

The key idea is to store the failure links only for some trie vertices. We call a subset $W$ of the vertices of a tree a \emph{$t$-dense} subset if each vertex $v \notin W$ has an ancestor $p \in W$ located at a distance less than $t$ edges from $v$. (Note that the definition implies that $W$ must contain the root.) Now we can formulate the main lemma.

\begin{lemma}
\label{lemma:kdense_aho}
Suppose that $\link{v}$ can be calculated in $O(1)$ time only for $v \in W$, where $W$ is a vertex set that is $t$-dense in the tree $\tree{\dict}$; then there is a modification of the Aho--Corasick algorithm that uses the links $\link{v}$ only for $v \in W$ and processes any text $T$ in $O(t|T| + occ)$ time, where $occ$ is the number of occurrences of the patterns in $T$.
\end{lemma}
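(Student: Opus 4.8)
The plan is to run the standard Aho--Corasick text scan, but whenever the algorithm would follow a failure link $\link{v}$ from a vertex $v \notin W$, we instead walk up the trie (using the $\mathsf{parent}$ operation available from the succinct trie encoding of Lemma~\ref{lemma:succinct_tree}, or rather from the navigation primitives already built) to the nearest ancestor $p \in W$, which is at distance less than $t$ by $t$-density, follow $\link{p}$ from there, and then descend again to recover the state corresponding to the longest suffix of $\str{v}$ spellable on a root path. The subtlety is that $\link{v}$ for $v \notin W$ need not equal $\link{p}$ composed with anything simple, so I would not try to simulate $\link{v}$ exactly; instead I would redesign the scanning invariant so that it never needs $\link{v}$ for $v \notin W$ in the first place.

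Concretely, first I would recall the exact role of the failure link in the scan: on reading $T[i]$, we need the vertex $v$ with $\str{v}$ equal to the longest suffix of $T[0..i]$ that is a trie path; the classical method finds it by following $\link{}$ from the old state until a transition on $T[i]$ exists. The key observation is that $\str{v}$ is a suffix of $\str{u}T[i]$ where $u$ is the old state, and more generally the set of candidate states forms the chain $u, \link{u}, \link{\link{u}}, \ldots$, i.e.\ the ancestors of $u$ in the failure tree. So the task reduces to: given $u$, walk its failure-tree ancestors in order until one has a $T[i]$-transition. To do this using only failure links at $W$-vertices, I would maintain, together with the current trie state $u$, a short "pending stack" of at most $t$ trie vertices lying on the trie-path from the nearest $W$-ancestor $p$ of $u$ down to $u$; these are exactly the failure-tree ancestors of $u$ that are strict descendants (in the trie) of $p$ — wait, that is not quite it, so more carefully: the failure-tree ancestors of $u$ whose depth exceeds $\mathrm{depth}(p)$ are precisely the vertices obtained by repeatedly removing the first letter of $\str{u}$ while the remainder is still a trie path, and these can be cached. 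When processing $T[i]$, I first try transitions from these cached deep ancestors (at most $t$ of them, in decreasing depth order), and if none works I jump to $p$ and follow $\link{p}, \link{\link{p}}, \ldots$ as usual — all these are $W$-vertices or handled recursively by the same mechanism, so each uses $O(1)$ time plus its own bounded local work.

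To bound the running time I would use the standard amortization: the "depth" of the current state increases by at most one per text character, so the total number of failure-link-type steps across the whole scan is $O(|T|)$; each such step now costs $O(t)$ instead of $O(1)$ because recovering or refreshing the pending stack of cached ancestors after a jump to $p$ requires re-walking up to $t$ trie edges, giving $O(t|T|)$ total, and reporting occurrences via the $\reportdef$ links adds the usual $O(occ)$. The main obstacle I anticipate is making precise the claim that the failure-tree ancestor chain of $u$ between $u$ and its $W$-ancestor $p$ coincides with an easily maintainable set of trie vertices, and showing this set can be updated in amortized $O(t)$ time under both a transition step and a failure jump — in particular that after jumping to $p$ and then descending on $T[i]$ we can rebuild the cache correctly; this is essentially a bookkeeping argument, but it is where all the care goes.
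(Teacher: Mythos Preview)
Your opening instinct---walk up the trie from $v$ to its nearest ancestor $p \in W$, at distance $<t$---is exactly right, and it is what the paper does. The gap is in what you do next.

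You propose to cache the failure-tree ancestors of $u$ whose depth exceeds $\mathrm{depth}(p)$, and if none of them has a $T[i]$-transition, to ``jump to $p$ and follow $\link{p}, \link{\link{p}}, \ldots$''. But $p$ is a \emph{trie}-ancestor of $u$: $\str{p}$ is a \emph{prefix} of $\str{u}$, not a suffix, so $p$ is not in general on the failure chain of $u$, and the failure chain of $p$ does not continue the failure chain of $u$. Concretely, take $\str{u} = abab$ and $p$ its trie-parent with $\str{p} = aba$; then $\link{u}$ is the vertex for $ab$, which has depth $2 \le \mathrm{depth}(p) = 3$ and hence is not in your cache, yet it is also not $\link{p}$ nor any iterate of $\linkdef$ from $p$ (e.g., if $ba$ is not in the trie then $\str{\link{p}}=a$). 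Your procedure would skip the state $ab$ entirely, potentially missing a transition and hence an occurrence. The issue you flagged yourself---that ``the failure-tree ancestor chain of $u$ between $u$ and its $W$-ancestor $p$ coincides with an easily maintainable set of trie vertices''---is not a bookkeeping detail; it is simply false in general. A related slip: your claim that $\link{p}, \link{\link{p}}, \ldots$ are ``all $W$-vertices or handled recursively'' is unjustified, since $W$ is $t$-dense in $\tree{\dict}$, not in the failure tree.

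The paper sidesteps all of this with a simpler idea you do not mention: \emph{backtrack in the text}. After walking up from $v$ to $p$ in the trie (say $j<t$ edges), it decreases the text index $i$ by $j$ and resumes the scan from $\link{p}$ at this earlier position. Correctness does not require $p$ to lie on $u$'s failure chain; it only needs that $\str{p}$ is a suffix of $T[0..i'-1]$ at the backtracked position $i'$, which is immediate since $\str{p}$ is a prefix of $\str{v}$ and $\str{v}$ was a suffix of $T$ at the original position. The time bound follows from the potential $k = i - |\str{v}|$: each recursive call either increases $i$ or increases $k$, and $i$ can drop by at most $t$ before $k$ increases, so the total work is $O(t|T|)$ plus $O(occ)$ for reporting. (The paper then adds a small trick with $\mathsf{parent\_edge}$ and $O(\sqrt{m})$ checkpoints to recover the backtracked letters in $o(m)$ extra space, but that is about space, not the time bound.)
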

\begin{proof}
Our algorithm essentially simulates the Aho--Corasick solution but in the case when the usual algorithm calculates $\link{v}$ for $v \notin W$, the new one instead finds the nearest ancestor $p \in W$ of $v$, ``backtracks'' the input string $T$ accordingly, then computes $\link{p}$, and continues the execution from this point (if $p$ is the root and, thus, $\link{p}$ is undefined, we simply skip one letter and continue). The pseudocode is as follows (the omitted code reporting patterns in line~\ref{lst:report} simply traverses report links and checks whether $\flag{v} = 1$):

\algtext*{EndFunction}
\algtext*{EndProcedure}
\algtext*{EndFor}
\algtext*{EndWhile}
\algtext*{EndIf}
\algloopdefx{NIf}[1]{\textbf{if} #1 \textbf{then}}
\algloopdefx{NElse}{\textbf{else}}
\algloopdefx{NElseIf}{\textbf{else if}}
\algloopdefx{NForAll}[1]{\textbf{for each} #1 \textbf{do}}
\algloopdefx{NWhile}[1]{\textbf{while} #1 \textbf{do}}
\algloopdefx{NFor}[1]{\textbf{for} #1 \textbf{do}}

\begin{algorithmic}[1]
\Function{$auto$}{$v$, $i$, $i_{max}$}
    \If{$i > i_{max}$}
        $i_{max} \gets i$ and report all patterns ending at position $i - 1$;\label{lst:report}
    \EndIf
    \If{$i = |T|$}
        \Return{$v$};
    \EndIf
    \If{$\trans{v}{T[i]} \neq \mathbf{nil}$}
    	\Return{$auto(\trans{v}{T[i]}, i + 1, i_{max})$};\label{lst:exit_on_go}
    \EndIf
    \For{$(p \gets v;\; p \notin W;\; i \gets i - 1)$}\label{lst:loop_p_beg} \Comment{C-style loop with three parameters}
    	\State $(p, c) \gets \mathsf{parent\_edge}(v)$;    \Comment{$p$ and $c$ are such that $v = \trans{p}{c}$}
    \EndFor\label{lst:loop_p_end}
    \If{$p$ is $root$}
        \Return{$auto(root, i + 1, i_{max})$};\label{lst:root_recurse}
    \EndIf
    \State\Return{$auto(\link{p}, i, i_{max})$};\label{lst:failure_recurse}
\EndFunction
\end{algorithmic}

The execution starts with $auto(root, 0, 0)$. The function $\mathsf{parent\_edge}(v)$ returns the parent $p$ of $v$ and the letter $c$ such that $v = \trans{p}{c}$ (note that we only use $p$); in~\cite{Belazzougui} it was shown that $\num{p} = x \bmod (m + 1)$ and $c = \lfloor x / (m + 1)\rfloor$, where $x = \mathsf{select}(\num{v}, \numdictdef)$ and $\numdictdef$ is the bit array that encodes the $\transdef$ transitions (see above). To prove the correctness, let us show by induction on the length of $T$ that $auto(root, 0, 0)$ returns the vertex $v_T$ such that $\str{v_T}$ is the longest suffix of $T$ that can be spelled out on a root-vertex path in $\tree{\dict}$.

The base $|T| = 0$ is trivial. Suppose that the claim holds for all lengths smaller than $|T|$. We are to show that $auto(root, 0, 0) = v_T$. By the inductive hypothesis, when $auto(v, i, i_{max})$ is called with $i = |T|{-}1$ for the first time, the string $\str{v}$ is the longest suffix of $T[0..|T|{-}2]$ that can be read on a root-vertex path of $\tree{\dict}$. Therefore, if $\trans{v}{T[|T|{-}1]} \ne \mathbf{nil}$, $\trans{v}{T[|T|{-}1]}$ obviously is equal to $v_T$ and we return it in line~\ref{lst:exit_on_go}. Now suppose that $\trans{v}{T[|T|{-}1]} = \mathbf{nil}$. In lines~\ref{lst:loop_p_beg}--\ref{lst:loop_p_end} we find the nearest ancestor $p$ of $v$ belonging to $W$ or put $p = v$ if $v \in W$, and backtrack accordingly to $T[0..i]$, for $i \in [0..|T|)$, such that $\str{p}$ is a suffix of $T[0..i{-}1]$. Observe the following claims: (i)~for any vertex $w$ such that $\str{w}$ is a suffix of $T[0..i{-}1]$, the result of $auto(w, i, i_{max})$ is the same as the result of $auto(root, 0, 0)$ with $T := T[i{-}|\str{w}|\ ..\ |T|{-}1]$; (ii)~$\str{v_T}$ is either the empty string or a proper suffix of $\str{v}$ concatenated with $T[|T|{-}1]$. When $p$ is not the root, the claim (ii) and the fact that $\str{\link{p}}$ is the longest proper suffix of $\str{p}$ present in the trie imply that $\str{v_T}$ is a suffix of $T[i{-}|\str{\link{p}}|\ ..\ |T|{-}1]$. Then, by the claim (i) and the inductive hypothesis, the recursion $auto(\link{p}, i, i_{max})$ in line~\ref{lst:failure_recurse} returns $v_T$. When $p$ is the root, (i) and (ii) analogously imply that the call to $auto(root, i + 1, i_{max})$ in line~\ref{lst:root_recurse} returns $v_T$.

Let us analyze the time complexity. The algorithm maintains two indices: $i$ and $k = i - |\str{v}|$. Each call to $auto(v, i, i_{max})$ with $i \ne |T|$ either increases $i$ in line~\ref{lst:exit_on_go} or increases $k$ in lines~\ref{lst:root_recurse} or~\ref{lst:failure_recurse}. Since $W$ is a $t$-dense subset, the loop~\ref{lst:loop_p_beg}--\ref{lst:loop_p_end} can decrease $i$ by at most $t$ before increasing $k$. Therefore, $i$ can be decreased by at most $t|T|$ in total and, hence, the running time of the whole algorithm is $O(t|T|)$ plus $O(occ)$ time to report pattern occurrences.

The presented solution explicitly stores $T$ (or at least its last $m$ letters) in order to support ``backtracking'' during the calculations. We, however, cannot afford to allocate the $m\log\sigma$ bits for $T$ and desire to fit the additional space within an $o(m)$ bound. To this end, we maintain only a substring $T[i..i']$ such that $i' - i \le 2\sqrt{m}$. While decreasing $i$ in the loop~\ref{lst:loop_p_beg}--\ref{lst:loop_p_end}, we grow this substring to the left using the letters $c$ returned by $\mathsf{parent\_edge}(v)$. Once $i' - i$ becomes larger than $2\sqrt{m}$, we simply decrement $i'$. Once $i$ becomes larger than $i'$ and $i' < i_{max}$, we must somehow restore the letters $T[i], T[i{+}1], \ldots$  Denote by $P$ the set of all positions $j \in [0..i_{max}]$ such that $j$ is a multiple of $\lceil\sqrt{m}\rceil$. For each $j \in P \cup \{i_{max}\}$, we store a vertex $v_j$ that was the vertex $v$ in the function $auto$ when we reached the position $j$ for the first time (so that $\str{v_j}$ is a suffix of $T[0..j{-}1]$). Since the loop~\ref{lst:loop_p_beg}--\ref{lst:loop_p_end} cannot make $i$ smaller than $i - |\str{v}|$, one can easily show that we always have $i \ge j - |\str{v_j}|$ for each $j \in P \cup \{i_{max}\}$. Thus, once $i > i'$, we compute in $O(1)$ time the position $j = \min\{\{i_{max}\} \cup \{j \in P \colon j \ge i + \sqrt{m}\}\}$, then put $i' = j - 1$, and restore the string $T[i..i']$ in $O(\sqrt{m})$ time iteratively applying the function $\mathsf{parent\_edge}$ to the vertex $v_j$.

Since $i_{max} - i$ cannot exceed $m$, it suffices to store $v_j$ only for the $\lceil\sqrt{m}\rceil + 1$ largest positions from $P$. One can maintain these $v_j$ in a straightforward way using a deque on circular array of length $O(\sqrt{m})$, so that access to arbitrary $v_j$ takes $O(1)$ time. Therefore, the additional space used is $O(\sqrt{m}\log m) = o(m)$ bits. By standard arguments, one can show that the time $O(\sqrt{m})$ required to restore $T[i..j{-}1]$ is amortized among at least $\sqrt{m}$ increments of $i$ that were performed to make $i > i'$. Thus, the total running time is $O(t|T|)$ as in the version that stores $T$ explicitly.
\end{proof}

To perform $\link{v}$, for $v \in W$, and to check whether $v \in W$, we use the next lemma.

\begin{lemma}
\label{lemma:tree_compression}
Let $W$ be a subset of vertices of a rooted tree $\mathcal{F}$ with $m$ edges. There is an encoding of $\mathcal{F}$ that occupies $2|W|\log\frac{m+1}{|W|} + O(|W|) + o(m)$ bits and, for any vertex $v$, allows to determine whether $v \in W$ in $O(1)$ time and to compute the parent of $v$ if $v \in W$ in $O(1)$ time, provided all the vertices are represented by their numbers in a depth first~traversal~of~$\mathcal{F}$.
\end{lemma}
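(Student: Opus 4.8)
The statement only requires two operations on $v \in W$: testing membership in $W$ and computing the \emph{parent in $\mathcal{F}$} of $v$. Crucially, it does \emph{not} require the parent to lie in $W$ nor to return any $W$-ancestor; it returns the true parent vertex of $v$, which is just some vertex of $\mathcal{F}$ represented by its DFS number. So the plan has two independent components: (1) an $o(m)$-bit marking structure to decide $v \in W$, and (2) a structure to recover the parent of a marked vertex, spending only $2|W|\log\frac{m+1}{|W|} + O(|W|)$ bits.

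\textbf{Step 1 (membership).} Build the bit array $M[0..m]$ with $M[i] = 1$ iff the vertex with DFS number $i$ is in $W$, and store it via Lemma~\ref{lemma:rrr}; this occupies $\log\binom{m+1}{|W|} + o(|W|) = O(|W|\log\frac{m}{|W|}) + o(m)$ bits and answers access (hence membership) in $O(1)$ time. (When $|W|\log\frac{m}{|W|}$ is not $o(m)$ — e.g. $|W|=\Theta(m)$ — this is absorbed into the $2|W|\log\frac{m+1}{|W|}$ term; when $|W|=o(m)$ it is $o(m)$.) Simultaneously $M$ lets us convert a DFS number of $\mathcal{F}$ into the rank of that vertex among the $|W|$ marked vertices, and back via $\mathsf{select}$, both in $O(1)$ time.

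\textbf{Step 2 (parent of a marked vertex).} The key observation is that the parents of the $|W|$ marked vertices form a multiset of only $|W|$ vertices of $\mathcal{F}$; we must be able, given the $W$-rank $r$ of a marked vertex $v$, to return the DFS number of its parent. We store this as a sequence of $|W|$ numbers in $[0..m]$. Naively that is $|W|\log(m{+}1)$ bits, which is too much. Instead, exploit the DFS ordering: sort the marked vertices by DFS number (this is exactly the $W$-rank order), so we need the sequence $q_1 \le q_2 \le \cdots$? — not monotone in general, but the set of parent DFS numbers, taken as a set (with multiplicity), has size $\le |W|$ and lies in $[0..m]$, so encoding it as a sorted sequence takes $\log\binom{m+1}{|W|} = O(|W|\log\frac{m}{|W|})$ bits via Lemma~\ref{lemma:rrr}; then a second structure of $O(|W|\log|W|) = O(|W|) + o(m)$ bits maps the $r$-th marked vertex to the index of its parent in this sorted set (a permutation-like table). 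Adding these gives roughly $2\log\binom{m+1}{|W|} + O(|W|) = 2|W|\log\frac{m+1}{|W|} + O(|W|) + o(m)$ bits, matching the claimed bound; and parent lookup chains $M.\mathsf{rank}$, one table access, and one $\mathsf{select}$, all $O(1)$.

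\textbf{Main obstacle.} The delicate point is squeezing the parent information into exactly $2|W|\log\frac{m+1}{|W|}$ rather than $|W|\log m$ bits, i.e. making the ``which parent'' indirection cost only $O(|W|)$ bits rather than $O(|W|\log|W|)$ in the regime where $|W|\log|W|$ is not $o(m)$. Here I would observe that, since the marked vertices are enumerated in DFS order and an ancestor's DFS number is always smaller and ``encloses'' its descendants, the correspondence between a marked vertex and its parent is not an arbitrary permutation but is constrained by the nesting (laminar) structure of $\mathcal{F}$; in fact one can encode the induced forest on $W \cup \{\text{parents}\}$ — a tree on at most $2|W|$ nodes — together with an injection of these $\le 2|W|$ nodes into $[0..m]$ (the DFS numbers), which by Lemma~\ref{lemma:rrr} costs $\log\binom{m+1}{2|W|} + O(|W|) = 2|W|\log\frac{m+1}{|W|} + O(|W|) + o(m)$ bits, with a succinct-tree layer of $O(|W|)$ bits on top supporting $\mathsf{parent}$ in $O(1)$. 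Combined with the $o(m)$-bit membership array of Step~1, this yields the stated bound and query times; the routine part is verifying that all the rank/select glue runs in $O(1)$ time given $\sigma \le m^\delta$ is not even needed here since no alphabet enters.
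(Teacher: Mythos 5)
Your final construction (the ``main obstacle'' paragraph) is mechanically sound: mark the set $U = W \cup \{\text{parents of }W\}$, $|U|\le 2|W|$, in a bit array over $[0..m]$ encoded with Lemma~\ref{lemma:rrr}, and store the tree induced on $U$ (nearest $U$-ancestor as parent) succinctly; since the $\mathcal{F}$-parent of any $v\in W$ lies in $U$, and the restriction of the DFS numbering of $\mathcal{F}$ to $U$ is a valid DFS numbering of the induced tree, a partial-rank, a succinct-tree parent query, and a select indeed recover the true parent in $O(1)$ time. The genuine gap is in the space accounting. The $U$-bit-array alone costs $\log\binom{m+1}{2|W|} = 2|W|\log\frac{m+1}{|W|} + O(|W|)$ bits, i.e., it already exhausts the entire budget of the lemma. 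On top of that you add the Step~1 membership array, which is \emph{not} ``$o(m)$ bits'' as asserted in your final assembly: it costs $\log\binom{m+1}{|W|} = |W|\log\frac{m+1}{|W|}+O(|W|)$ bits, which is $\Theta(m)$ when, say, $|W|=\Theta(m)$. The total is therefore $3|W|\log\frac{m+1}{|W|}+O(|W|)+o(m)$, and the surplus term $|W|\log\frac{m+1}{|W|}$ cannot be absorbed into $O(|W|)+o(m)$ uniformly (for $|W|=(m+1)2^{-2C}$ it equals $2C|W|=\Theta(m)$ and defeats any fixed constant $C$). The same kind of slip appears in your first attempt at Step~2, where you claim $O(|W|\log|W|)=O(|W|)+o(m)$; this already fails for $|W|=m/\log m$, where $|W|\log|W|=\Theta(m)$ while $O(|W|)+o(m)=o(m)$. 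The fix inside your own framework is simple: drop the separate membership array and instead store a plain flag array of $|U|\le 2|W|$ bits, indexed by $U$-rank, recording which elements of $U$ lie in $W$; membership is then a partial-rank on the $U$-array plus one flag lookup, and the total drops to $\log\binom{m+1}{2|W|}+O(|W|)+o(m)\le 2|W|\log\frac{m+1}{|W|}+O(|W|)+o(m)$, as claimed.

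For comparison, the paper takes a route that avoids the induced-subtree indirection altogether, which is exactly what lets it afford a stand-alone membership array of $|W|\log\frac{m+1}{|W|}+O(|W|)$ bits. It keeps all $m+1$ vertices of $\mathcal{F}$ and redefines the parent of \emph{every} vertex to be its nearest \emph{important} ancestor, where a vertex is important if it is the root or has a child in $W$; there are at most $|W|+1$ important (hence internal) vertices in the transformed tree, the numbering $\numdef$ is still a DFS numbering of it, and for $v\in W$ its transformed parent coincides with its true parent. Encoding this tree with Lemma~\ref{lemma:ultra_succinct_tree}, whose cost $d\log\frac{m}{d}+O(d)+o(m)$ depends only on the number $d$ of internal vertices, yields the parent structure in a single $|W|\log\frac{m+1}{|W|}+O(|W|)+o(m)$ term and answers $\mathsf{parent}$ directly on the original numbering, with no rank/select translation between $[0..m]$ and a compressed index set.
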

\begin{proof}
Let $\num{v}$ be a vertex numbering that corresponds to a depth first traversal of $\mathcal{F}$ (the numbers are from the range $[0..m]$). To check whether $v \in W$, we create a bit array $A$ of length $m + 1$ such that, for each vertex $v$, we have $A[\num{v}] = 1$ iff $v \in W$. The array $A$ is encoded in $\log\binom{m+1}{|W|} + o(|W|) = |W|\log\frac{m+1}{|W|} + O(|W|)$ bits in the data structure of Lemma~\ref{lemma:rrr} that supports  access to $A[i]$ in $O(1)$ time using partial rank queries.

\begin{figure}[htb]
\centering
\begin{subfigure}[t]{.35\textwidth}
\centering
\includegraphics[scale=0.2]{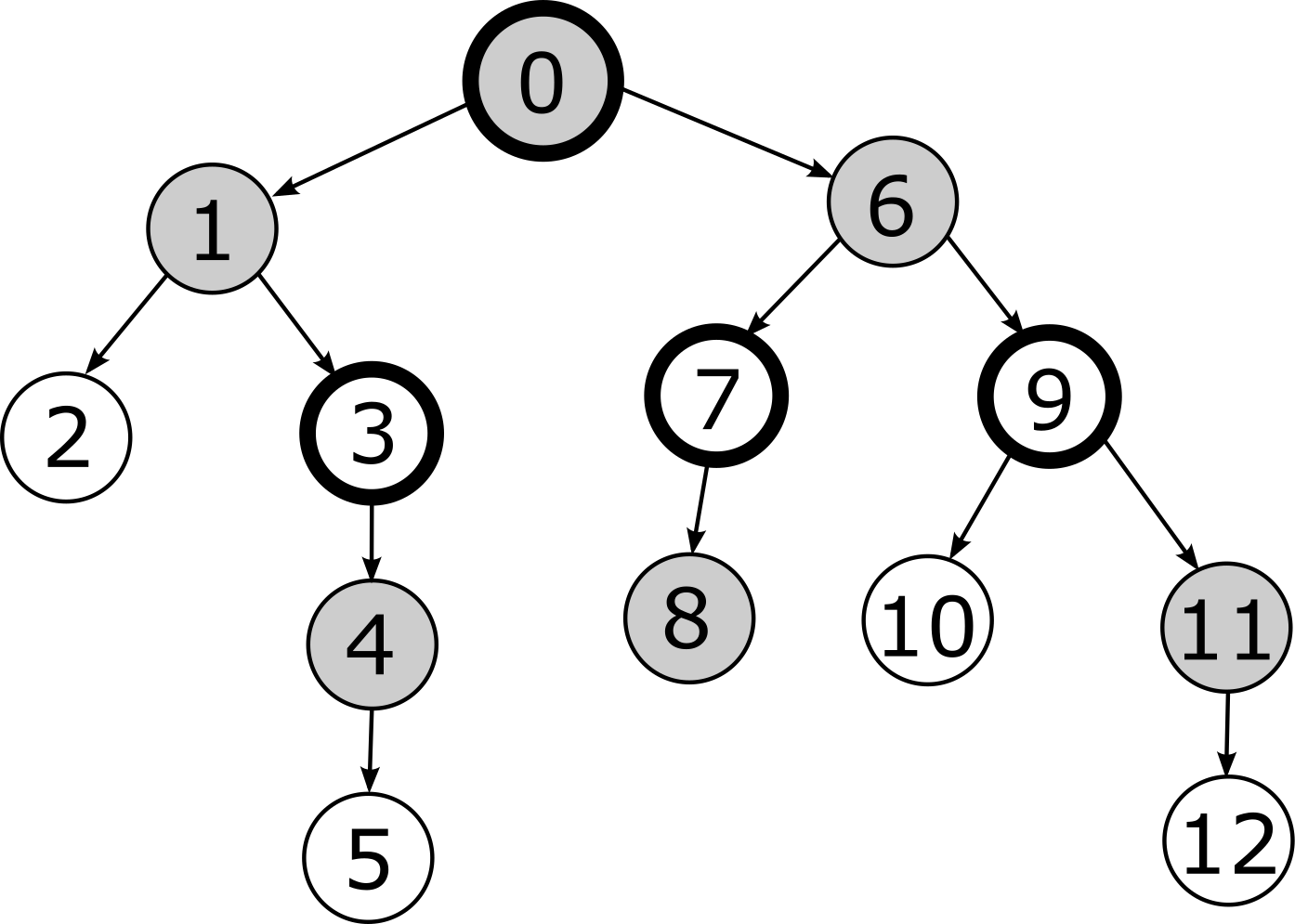}
\end{subfigure}
\hfill
\begin{subfigure}[t]{.6\textwidth}
\centering
\includegraphics[scale=0.2]{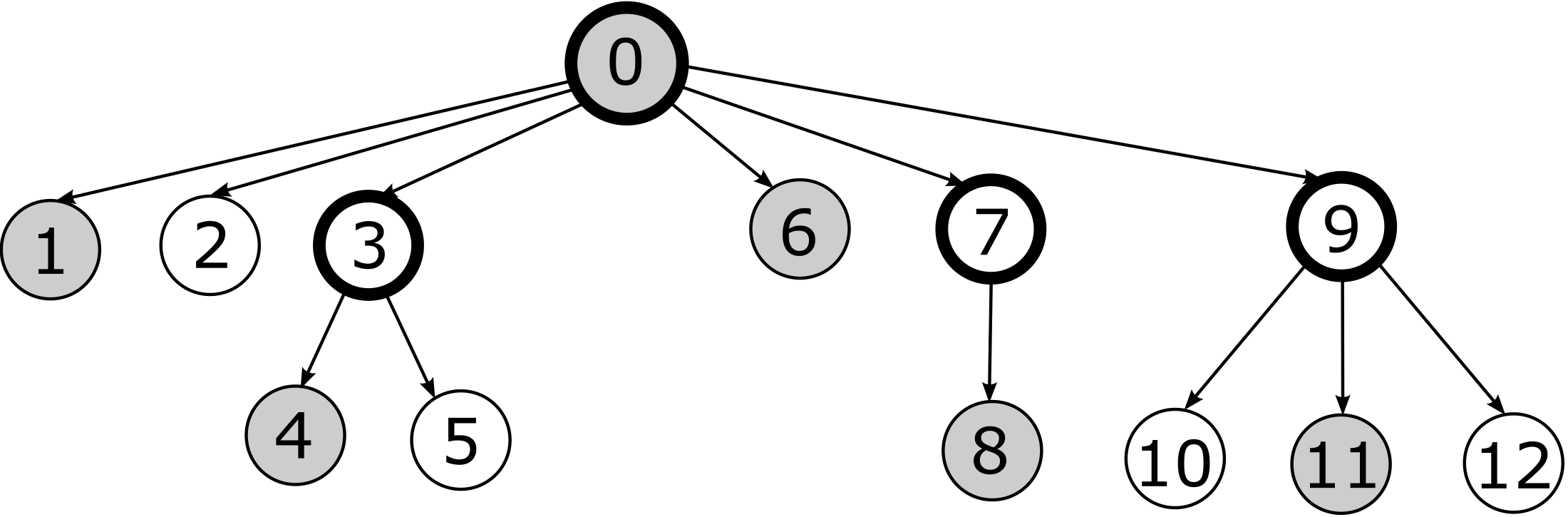}
\end{subfigure}
\caption{The transformation of Lemma~\ref{lemma:tree_compression} for the tree of failure links from Figure~\ref{fig:example}: the left tree is original, the right tree is transformed; vertices from $W$ are gray, important vertices are bold.}
\label{fig:trans}
\end{figure}

Let us now describe a transformation of the tree $\mathcal{F}$ that preserves the function $\textsf{parent}(v)$ for $v \in W$ (see Figure~\ref{fig:trans}). We call a vertex $v$ \emph{important} if either $v$ is the root or $v$ has a child from $W$. In the transformed tree, for each vertex $u$, the parent of $u$ is the nearest important ancestor of $u$ in $\mathcal{F}$. Obviously, there are at most $|W| + 1$ internal vertices in thus defined tree. Further, it is easy to see that the vertex numbering $\numdef$ corresponds to a depth first traversal of the transformed tree. Therefore, we can encode the new tree in $|W|\log\frac{m+1}{|W|} + O(|W|) + o(m)$ bits using the data structure of Lemma~\ref{lemma:ultra_succinct_tree}; for $v \in W$, the query $\mathsf{parent}(v)$ on this structure returns the number $\numdef$ of the parent of $v$ in the tree $\mathcal{F}$.
\end{proof}

Combining Lemmas~\ref{lemma:kdense_aho} and~\ref{lemma:tree_compression}, we prove the main theorem.

\begin{theorem}
\label{theorem:compact_tree}
Let $\varepsilon \in (0,2)$ be an arbitrary constant and let $\dict$ be a set of $d$ patterns over the alphabet $[0..\sigma)$ such that $\sigma \le m^\delta$, for some constant $\delta < 1$, where $m$ is the number of edges in the trie $\tree{\dict}$ containing $\dict$. Then, there is a data structure that allows to find all $occ$ occurrences of the patterns in any text $T$ in $O(|T| + occ)$ time and occupies $m H_k(\tree{\dict}) + 1.443m + \varepsilon m + O(d\log\frac{m}{d})$ bits simultaneously for all $k \in [0..\max\{0,\alpha\log_\sigma m{-}2\}]$, where $\alpha \in (0,1)$ is an arbitrary constant.
\end{theorem}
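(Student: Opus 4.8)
The plan is to assemble the data structure from the pieces established above, choosing the density parameter $t$ in terms of $\varepsilon$ so that both the time bound $O(t|T| + occ)$ and the space bound $\varepsilon m + o(m)$ for the failure links come out right. First I would fix $t = \lceil c/\varepsilon \rceil$ for a suitable absolute constant $c$ (to be determined in the space analysis below), so that the running time $O(t|T| + occ) = O(|T|\varepsilon^{-1} + occ)$, which is $O(|T| + occ)$ since $\varepsilon$ is a constant; I should note that the abstract's $O(|T|\varepsilon^{-1}\log\varepsilon^{-1})$ bound suggests an extra logarithmic factor creeps in somewhere, presumably from the auxiliary navigation (e.g.\ choosing $t$ slightly larger to keep $|W|$ small enough), so I would be careful to track where the $\log\varepsilon^{-1}$ arises — most likely one wants $|W| \le \frac{\varepsilon}{c'} m$ and a $t$-dense set of that size requires $t = \Theta(\varepsilon^{-1}\log\varepsilon^{-1})$ rather than $\Theta(\varepsilon^{-1})$ once one insists the set be constructible greedily along heavy paths. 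For now the structural outline is: encode $\transdef$ via Lemma~\ref{lemma:compression_boosting}; encode $\flagdef$ and $\reportdef$ exactly as Belazzougui does (Lemmas~\ref{lemma:rrr} and~\ref{lemma:ultra_succinct_tree}), costing $O(d\log\frac{m}{d}) + o(m)$ bits; encode a $t$-dense subset $W$ of the failure-link tree $\tree{\dict}$ together with the restricted $\linkdef$ function via Lemma~\ref{lemma:tree_compression}; and run the modified Aho--Corasick algorithm of Lemma~\ref{lemma:kdense_aho} on top of these.

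Next I would verify the space accounting term by term. Lemma~\ref{lemma:compression_boosting} gives $mH_k(\tree{\dict}) + 1.443m + o(m)$ bits for the transitions, simultaneously for all $k \in [0..\max\{0,\alpha\log_\sigma m - 2\}]$, which is exactly the $mH_k + 1.443m$ part of the claimed bound. The $\flagdef$ and $\reportdef$ structures contribute $O(d\log\frac{m}{d}) + o(m)$. The remaining job is the failure links: by Lemma~\ref{lemma:tree_compression} they cost $2|W|\log\frac{m+1}{|W|} + O(|W|) + o(m)$ bits, and I need this to be at most $\varepsilon m + o(m)$. The function $x \mapsto 2x\log\frac{m+1}{x}$ is increasing on $(0, (m+1)/e]$, so it suffices to bound $|W|$ by $\beta m$ for a small enough $\beta = \beta(\varepsilon)$; solving $2\beta\log\frac{1}{\beta} + O(\beta) \le \varepsilon$ shows $\beta = \Theta(\varepsilon/\log\varepsilon^{-1})$ works. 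Hence I must produce a $t$-dense subset $W$ of $\tree{\dict}$ with $|W| \le \beta m$, where the relation between $t$ and $\beta$ is $\beta \approx 1/t$.

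That last point is the main obstacle, and I would handle it with a greedy construction: repeatedly pick any vertex $v$ whose depth is a multiple of $t$ (counting from the root), i.e.\ take $W = \{\text{root}\} \cup \{v : \operatorname{depth}(v) \equiv 0 \pmod t\}$. This $W$ is $t$-dense by definition, since every vertex is within $t-1$ edges of the nearest strict ancestor whose depth is a multiple of $t$. To bound $|W|$, I observe that the $m+1$ vertices are partitioned by depth-class, and I can charge each vertex of $W$ to the $\ge t$... — but this charging fails at shallow subtrees, so instead I charge as follows: to each $v\in W$ with $v\neq\text{root}$, associate the edge entering $v$ together with the $t-1$ edges on the path from $v$'s depth-$(\operatorname{depth}(v)-t)$ ancestor down to $v$'s parent; these edge-bundles are disjoint across distinct $v \in W$ (they live in disjoint depth-intervals along each root-to-leaf path in the sense that... ) — cleanly, the bundles are disjoint because an edge at depth $d$ is assigned only to the unique $W$-vertex at depth $t\lceil d/t\rceil$ below it. Thus $|W| - 1 \le m/t$, giving $|W| \le m/t + 1$. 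Choosing $t = \lceil 2\varepsilon^{-1}\log\varepsilon^{-1}\rceil$ (or any $t$ with $\frac{2}{t}\log t + O(1/t) \le \varepsilon$) then yields $|W| \le \beta m$ with the required $\beta$, so the failure-link encoding fits in $\varepsilon m + o(m)$ bits and the algorithm of Lemma~\ref{lemma:kdense_aho} runs in $O(t|T| + occ) = O(|T|\varepsilon^{-1}\log\varepsilon^{-1} + occ) = O(|T| + occ)$ time. Summing all terms gives $mH_k(\tree{\dict}) + 1.443m + \varepsilon m + O(d\log\frac{m}{d})$ bits, as claimed, where the $o(m)$ contributions from Lemmas~\ref{lemma:compression_boosting}, \ref{lemma:ultra_succinct_tree}, and~\ref{lemma:tree_compression} and from the algorithm's working space are absorbed (strictly, one should fold the $o(m)$ into $\varepsilon m$ or state the bound with an $o(m)$ term — I would check which convention the paper uses and phrase accordingly).
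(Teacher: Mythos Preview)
Your overall assembly is exactly the paper's: encode $\transdef$ by Lemma~\ref{lemma:compression_boosting}, $\flagdef$ and $\reportdef$ as in Belazzougui, failure links by Lemma~\ref{lemma:tree_compression} on a $t$-dense subset $W$, run Lemma~\ref{lemma:kdense_aho}, and pick $t = \Theta(\varepsilon^{-1}\log\varepsilon^{-1})$. The space and time arithmetic you sketch is also correct, and you are right that the $o(m)$ terms are absorbed into $\varepsilon m$ (the paper does this explicitly by reserving $\varepsilon m/2$ for them).

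There is, however, a genuine gap in your construction of $W$. Taking $W = \{\text{root}\}\cup\{v:\operatorname{depth}(v)\equiv 0\pmod t\}$ does \emph{not} give $|W|\le m/t+1$; your charging argument breaks because the $t$-edge bundles above distinct $W$-vertices can overlap. Concretely, take the complete binary trie of depth $t$: it has $m = 2^{t+1}-2$ edges but $2^t$ vertices at depth $t$, so $|W| = 1 + 2^t \approx m/2$, far larger than $m/t$. The sentence ``an edge at depth $d$ is assigned only to the unique $W$-vertex at depth $t\lceil d/t\rceil$ below it'' is exactly where the error lives: there can be many $W$-vertices below a given edge at that depth.

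The paper sidesteps this with a pigeonhole step you are missing. Define $W_i$ to be the root together with all vertices whose depth is congruent to $i$ modulo $t$, for each $i\in[0..t)$. Every $W_i$ is $t$-dense, and since the non-root vertices are partitioned among the residue classes, some $j$ satisfies $|W_j|\le\lceil(m+1)/t\rceil$. Using that $W_j$ (rather than insisting on $i=0$) gives the bound you need, and the rest of your argument then goes through verbatim.
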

\begin{proof}
Let us construct a small $t$-dense subset for the tree $\tree{\dict}$. Observe that the set $W_i$ that consists of the root and all vertices of $\tree{\dict}$ with height $h \equiv i \pmod{t}$ is a $t$-dense subset. Obviously, there exists $j \in [0..t)$ such that $|W_j| \le \lceil\frac{m+1}{t}\rceil$. We apply Lemma~\ref{lemma:tree_compression} to the tree $\mathcal{F}$ of failure links and the subset $W_j$, provided all vertices are represented by the numbering $\numdef$ defined in Section~\ref{sec:basic_algorithm} (in~\cite{Belazzougui} it was shown that this numbering corresponds to a depth first traversal of $\mathcal{F}$). By Lemma~\ref{lemma:kdense_aho}, this allows us to solve the dictionary matching problem in $O(t|T| + occ)$ time with only $2\lceil\frac{m+1}{t}\rceil\log{t} + O(\frac{m}{t}) + o(m)$ bits used for the failure links, which is upper bounded by $m \frac{c\log t}{t}$, for an appropriate constant $c > 0$. We add to the failure links the data structures for $\flagdef$, $\transdef$, and $\reportdef$ described in Sections~\ref{sec:basic_algorithm} and~\ref{sec:compression_boosting} (Lemma~\ref{lemma:compression_boosting}), which consume $m H_k(\tree{\dict}) + 1.443m + o(m) + O(d\log\frac{m}{d})$ bits (simultaneously for all $k \in [0..\max\{0,\alpha\log_\sigma m{-}2\}]$), and choose $t$ in such a way that $\frac{c\log t}{t} \le \varepsilon / 2$, so that the failure links take only $\varepsilon m / 2$ bits. Solving the equation, we obtain $t = \Theta(\varepsilon^{-1} \log{\varepsilon^{-1}})$. Since $\varepsilon$ is constant, the additive terms $o(m)$ in space can be upper bounded by $\varepsilon m / 2$ and $t$ in $O(t|T| + occ)$ can be hidden under the big-O, so that the total space is $m H_k(\tree{\dict}) + 1.443m + \varepsilon m + O(d\log\frac{m}{d})$ bits and the processing time is $O(|T| + occ)$.
\end{proof}

Note that, as it follows from the proof, the big-O notation in Theorem~\ref{theorem:compact_tree} hides a slowdown of the processing time to $O(|T|\varepsilon^{-1} \log{\varepsilon^{-1}} + occ)$, which is the price of the space improvement.

Let us now show that our solution is, in a sense, close to space optimal when $d = o(m)$.
It is easy to see that any data structure solving the multiple pattern matching problem implicitly encodes the trie $\tree{\dict}$ containing the dictionary $\dict$ of patterns: if the data structure is a black box, then we can enumerate all possible strings of length ${\le}m$ over the alphabet $[0..\sigma)$ and check which of them are recognized by the black box, thus finding all the patterns from $\dict$. It is known that the number of tries with $m$ edges over an alphabet of size $\sigma$ is at least $\frac{1}{\sigma(m + 1) + 1}\binom{\sigma(m + 1)}{m + 1}$ (see \cite[eq.~7.66]{GrahamKnuthPatashnik} and~\cite[Thm~2.4]{Clark}). Note that $\binom{\sigma(m + 1)}{m + 1} \ge \binom{\sigma(m + 1)}{m}$, for $\sigma \ge 2$. Therefore, $\log(\frac{1}{\sigma(m + 1) + 1} \binom{\sigma(m + 1)}{m + 1}) \ge \log\binom{\sigma(m + 1)}{m} - O(\log (\sigma m))$ is a lower bound for the worst-case space consumption of any solution for the multiple pattern matching.

\newpage
\begin{theorem}
\label{theorem:main_theorem2}
For any constant $\varepsilon \in (0, 2)$ and any set $\dict$ of $d$ patterns over the alphabet $[0..\sigma)$ such that $d = o(m)$, where $m$ is the number of edges in the trie containing $\dict$, there is a data structure that allows to find all $occ$ occurrences of the patterns in text $T$ in $O(|T| + occ)$ time and occupies $L + \varepsilon m$ bits of space, where $L = \log\binom{\sigma(m + 1)}{m} - O(\log (\sigma m))$ is a lower bound on the worst-case space consumption for any such data structure.
\end{theorem}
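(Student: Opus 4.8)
The plan is to reuse essentially all the machinery of Theorem~\ref{theorem:compact_tree}, but with one crucial change: instead of compressing the transition array $\numdictdef$ by (fixed block) boosting as in Lemma~\ref{lemma:compression_boosting}, I would apply the plain succinct representation of Lemma~\ref{lemma:rrr} to $\numdictdef$ \emph{as a single array}. An array of length $\sigma(m{+}1)$ with exactly $m$ ones then occupies $\log\binom{\sigma(m+1)}{m}+o(m)$ bits, which is precisely $L+O(\log(\sigma m))+o(m)$ by the very definition of $L$. This representation already supports $\mathsf{select}$ and partial $\mathsf{rank}$ in $O(1)$ time, which is all that Belazzougui's simulation of $\transdef$ and the routine $\mathsf{parent\_edge}$ require (Section~\ref{sec:basic_algorithm} and the proof of Lemma~\ref{lemma:kdense_aho}), so this part needs no auxiliary structures and, in particular, no assumption $\sigma\le m^\delta$.

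Next I would treat the failure links exactly as in the proof of Theorem~\ref{theorem:compact_tree}: pick $j\in[0..t)$ with $|W_j|\le\lceil\frac{m+1}{t}\rceil$, where $W_j$ consists of the root and all vertices of $\tree{\dict}$ of height $\equiv j\pmod t$, apply Lemma~\ref{lemma:tree_compression} to the failure tree and $W_j$ (the numbering $\numdef$ is a depth first order of it), and run the algorithm of Lemma~\ref{lemma:kdense_aho}; this costs $\le m\frac{c\log t}{t}$ bits for a constant $c$ and $O(t|T|+occ)$ time. Choosing $t=\Theta(\varepsilon^{-1}\log\varepsilon^{-1})$ so that $\frac{c\log t}{t}\le\varepsilon/2$, the failure links take $\varepsilon m/2$ bits and the time becomes $O(|T|\varepsilon^{-1}\log\varepsilon^{-1}+occ)=O(|T|+occ)$ since $\varepsilon$ is constant. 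For $\flagdef$ and $\reportdef$ I would keep Belazzougui's encodings via Lemmas~\ref{lemma:rrr} and~\ref{lemma:ultra_succinct_tree}, of total size $O(d\log\frac{m}{d})+o(m)$ bits.

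The final step is to cash in the hypothesis $d=o(m)$: writing $d=\eta m$ with $\eta\to 0$, we get $d\log\frac{m}{d}=m\cdot\eta\log\frac1\eta=o(m)$ because $x\log\frac1x\to0$ as $x\to0^+$, so $\flagdef$ and $\reportdef$ contribute only $o(m)$ bits. Summing, the data structure occupies $L+O(\log(\sigma m))+\varepsilon m/2+o(m)$ bits; since $\varepsilon$ is a constant and $\log\sigma=o(m)$ (the only regime in which the statement is meaningful), the term $O(\log(\sigma m))+o(m)$ is at most $\varepsilon m/2$ for all sufficiently large $m$, which yields the bound $L+\varepsilon m$ together with query time $O(|T|+occ)$.

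I do not expect a real obstacle here: the whole content of the theorem is the pair of observations that (i)~the single-array Lemma~\ref{lemma:rrr} encoding of $\numdictdef$ already matches the lower bound $L$ up to lower-order terms, while compression boosting would only give the weaker $m\log\sigma+1.443m$-type bound, and (ii)~the assumption $d=o(m)$ is exactly what is needed to push the $O(d\log\frac{m}{d})$ term into $o(m)$. The only point requiring a line of care is the additive $O(\log(\sigma m))$ coming from the definition of $L$, which is harmless as soon as $\log\sigma=o(m)$; everything else is a direct reassembly of Lemmas~\ref{lemma:rrr}, \ref{lemma:kdense_aho}, \ref{lemma:tree_compression}, and~\ref{lemma:ultra_succinct_tree}.
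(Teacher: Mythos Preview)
Your proposal is correct and follows essentially the same route as the paper: encode $\numdictdef$ as a single array via Lemma~\ref{lemma:rrr} to get $\log\binom{\sigma(m+1)}{m}+o(m)$ bits for the transitions, sparsify the failure links with parameter $\varepsilon/2$ as in Theorem~\ref{theorem:compact_tree}, and use $d=o(m)$ to push $O(d\log\frac{m}{d})$ into $o(m)$. The paper's write-up is terser---it simply invokes Theorem~\ref{theorem:compact_tree} with $\varepsilon/2$ and then swaps in the plain Lemma~\ref{lemma:rrr} encoding of $\numdictdef$---but the substance is identical, and your explicit remark that the single-array encoding removes the need for the hypothesis $\sigma\le m^\delta$ is a nice clarification.
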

\begin{proof}
Since $d = o(m)$, we have $d = \frac{m}{f(m)}$, where $f(m) \xrightarrow{m\to\infty} +\infty$, and therefore $d\log\frac{m}{d} = \frac{m}{f(m)}\log f(m) = o(m)$. Thus, applying Theorem~\ref{theorem:compact_tree} for $\frac{\varepsilon}{2}$, we obtain a data structure occupying $m H_k + 1.443 m + \frac{\varepsilon}{2}m + o(m)$ bits. Further, applying the simple encoding from Lemma~\ref{lemma:rrr} for the bit array $\numdictdef$ representing the transitions $\transdef$, we obtain a solution occupying $\log\binom{\sigma(m + 1)}{m} + \frac{\varepsilon}{2}m + o(m)$ bits. Since $\varepsilon$ is constant, we have $o(m) \le \frac{\varepsilon}{2}m$; hence, the result follows.
\end{proof}

\bibliography{paper}

\newpage
\appendix
\section{Implementation and Experiments}
\label{sec:implementation}

We implemented the data structure described in this paper in C++ and compared its runtime and memory consumption with the Belazzougui's solution~\cite{Belazzougui} and a naive algorithm. We could not find implementations of the Belazzougui's data structure and implemented it too. In~\cite{SokolShoshana} we found a different data structure based on compressed suffix trees but it showed a very poor performance, so we decided to exclude it from the tests.

The experiments were performed on a machine equipped with six 1.8 GHz Intel Xeon E5-2650L v3 CPUs with 30\,MiB L3 cache and 16\,GiB of RAM. The OS was Ubuntu 16.04.3 LTS, 64bit running kernel 4.4.0. All programs were compiled using {\tt g++} v5.4.0 with {\tt -O3} {\tt -march=x86-64} options. The source codes of all tested algorithms are available at \url{https://bitbucket.org/umqra/multiple-pattern-matching}. At the same URL one can find the 6 texts and 7 dictionaries on which the experiments were run.

The texts are as follows (see also Table~\ref{tab:data}):
\begin{itemize}
  \item {\tt program}: a sample binary file generated by a simple algorithm (see the URL above);
  \item {\tt chr1.dna}: Human first chromosome genome sequence in FASTA format\footnote{\url{ftp://ftp.ncbi.nih.gov/genomes/H_sapiens/Assembled_chromosomes/seq/hs_alt_CHM1_1.1_chr1.fa.gz}};
  \item {\tt wiki.en/wiki.ja/wiki.ru/wiki.zh}: the first 200Mb of the dump of all English/Japanese/Russian/Chinese wikipedia articles\footnote{\url{https://dumps.wikimedia.org/}}.
\end{itemize}
The dictionaries are as follows (see also Table~\ref{tab:data}):
\begin{itemize}
  \item {\tt dna.dict}: reads generated for {\tt chr1.dna} by {\tt wgsim} simulator\footnote{\url{https://github.com/lh3/wgsim}} with read length {\textasciitilde}100;
  \item {\tt urls.dict}: URLs from the .su/.nu zones\footnote{\url{https://zonedata.iis.se/}} and the Alexa database of popular URLs\footnote{\url{http://s3.amazonaws.com/alexa-static/top-1m.csv.zip}};
  \item {\tt virus.dict}: virus signatures from the {\tt main.cvd} file of the ClamAV database\footnote{\url{https://www.clamav.net/downloads}};
  \item {\tt ttl.en.dict/ttl.ja.dict/ttl.ru.dict/ttl.zh.dict}: the titles (in lower case) of some English/Japanese/Russian/Chinese wikipedia articles with length at least 3 letters\footnote{\label{foot:wiki}\url{https://dumps.wikimedia.org/}} (the dictionaries were truncated to reduce temporary space used in the index construction).
\end{itemize}

Our implementations use the {\tt SDSL} library by Gog~\cite{GogSDSL}. We tested the following algorithms:
\begin{itemize}
\item {\tt blz}: the original Belazzougui's compressed data structure~\cite{Belazzougui} in which the bitvectors $\numdictgen{0}, \ldots, \numdictgen{\sigma-1}$ were implemented using {\tt sd\_vector} from the {\tt SDSL};
\item {\tt cblz}: our algorithm with fixed block compression boosting;
\item {\tt cblz8}: the same as {\tt cblz} but with failure links sparsified using an 8-dense vertex subset;
\item {\tt smp}: a simple $O(nm)$-time algorithm that uses only two components of the {\tt blz} data structure: the bitvectors $\numdictgen{0}, \ldots, \numdictgen{\sigma-1}$ and an array of length $m{+}1$ encoding the $\flagdef$ flags; they all are implemented using {\tt sd\_vector} from the {\tt SDSL}.
\end{itemize}

For each algorithm, we ran 7 tests: in each test we search the patterns from a dictionary of Table~\ref{tab:data} in the corresponding text from the same table row (note that {\tt wiki.en} is used in two dictionaries). The results are present in Figure~\ref{fig:plot}. The Aho--Corasick data structure required too much memory in our experiments, so we do not include it.

\begin{table}[tb]
\caption{\label{tab:data}
Statistics of the test dictionaries of patterns (number of patterns, average pattern length, $m$, $\sigma$) and texts in which the patterns were searched (length, number of found occurrences).}\setlength\tabcolsep{5.2pt}
  \begin{tabular}[t]{lrrrr} \toprule
    Dictionary    & patterns  & avg. len. & $m$ & $\sigma$ \\\midrule
    dna.dict      & 1,999,911   &  100.0 & 178,323,409 & 5       \\
    virus.dict    & 4,059,198   &  16.0  & 56,430,521  & 256     \\
    urls.dict     & 3,825,132   &  16.7  & 39,385,319  & 40      \\
    ttl.en.dict   & 3,875,263   &  18.3  & 32,255,913  & 2881    \\
    ttl.ja.dict   & 1,691,693   &  8.4   & 7,182,594   & 7329    \\
    ttl.ru.dict   & 4,145,276   &  22.5  & 39,908,335  & 1350    \\
    ttl.zh.dict   & 1,649,209   &  7.3   & 6,164,034   & 12113   \\ \bottomrule
  \end{tabular}
  \begin{tabular}[t]{lrr} \toprule
    Text          & length         & found occ. \\\midrule
    chr1.dna      & 228,503,292    & 157,393     \\
    program       & 104,857,600    & 1,347,031   \\
    wiki.en       & 209,040,222    & 3,461,097   \\
    wiki.en       & 209,040,222    & 137,071,073 \\
    wiki.ja       & 109,018,155    & 15,196,791  \\
    wiki.ru       & 131,898,987    & 65,428,137 \\
    wiki.zh       & 126,021,996    & 18,925,337 \\ \bottomrule
  \end{tabular}
\end{table}

All dictionaries can be split into two unequal groups: dictionaries with long patterns (in our case it is only {\tt dna.dict}) and short patterns (all other dictionaries). On {\tt dna.dict} the fastest algorithms are {\tt blz} and {\tt cblz}; both {\tt cblz8} and {\tt smp} are about two times slower than them. Not surprisingly, {\tt smp} is the fastest algorithm on short patterns and the algorithms {\tt blz}, {\tt cblz}, and {\tt cblz8} are 2--3 times slower than {\tt smp}. Because of some implementation details (we use a specially tailored version of {\tt sd\_vector}), {\tt cblz} is faster than the simpler algorithm {\tt blz} in our tests. We were unable to explain why {\tt cblz8} often works faster than the simpler {\tt cblz} algorithm. To sum up, the three {\tt blz} algorithms have acceptable running times, but it really makes sense to use them only on relatively long dictionary patterns.

The central chart shows that in most cases the dictionaries are very well compressed (in comparison with $\frac{\log\sigma}{8}$ bytes per letter; see the shaded columns). But, as one can conclude from the low compression ratio of the corresponding trie on the right chart, the dictionary compression is mostly due to the assembling of the patterns in the trie (for instance, while the total length of the patterns $\{a^{i}b\}_{i=0}^{k-1}$ is $k(k+1) / 2$, their trie contains only $m = 2k$ edges). In the right chart, one can observe that the effect of compression boosting in {\tt cblz} is hidden under the overhead imposed by additional structures, and, because of this, {\tt cblz} occupies about the same space as {\tt blz}. The space optimized version {\tt cblz8} lowers the overhead and the resulting data structure occupies about the same space as {\tt smp}; this, surely, is possible only because {\tt cblz8} is $H_k$-compressed while {\tt smp} is only $H_0$-compressed.

\begin{figure}[hbt]
\centering
\hspace*{-0.7cm}\includegraphics[scale=0.38]{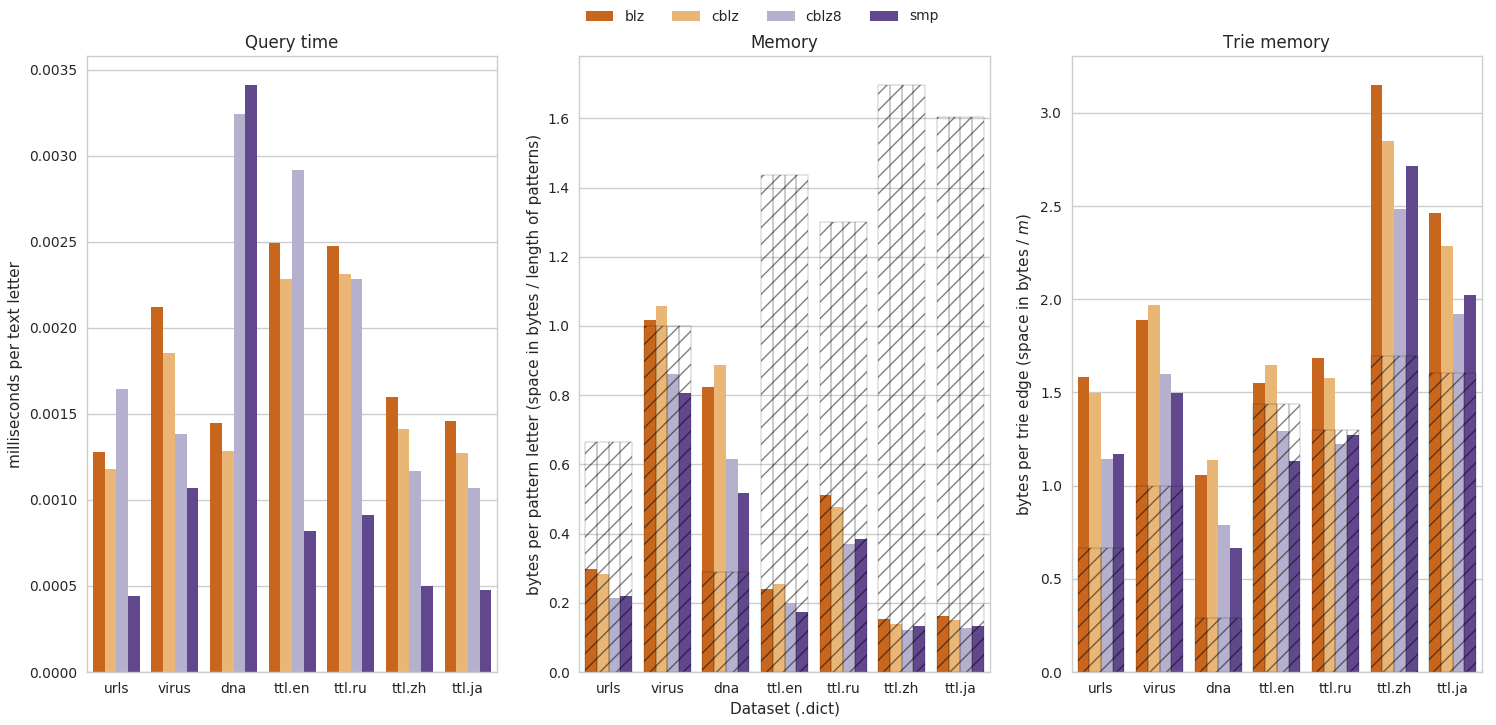}
\caption{Performance and space consumption; shaded columns correspond to $\frac{\log\sigma}{8}$.}\label{fig:plot}
\end{figure}

\end{document}